\theoremstyle{plain}
\theoremstyle{definition}
  \newtheorem{theorem}{Theorem}
  \newtheorem{corollary}{Corollary}
  \newtheorem{definition}{Definition}
  \newtheorem{lemma}{Lemma}
  \newtheorem{proposition}{Proposition}
  \theoremstyle{remark}
\def\EE{\mathbb{E}}
\def\RR{\mathbb{R}}
\DeclareMathOperator*{\argmax}{arg\,max}
\begin{document}


\title{Temporary exclusion in repeated contests}

\author{Yaron Azrieli\thanks{Department of Economics, The Ohio State University, 1945 North High street, Columbus, OH 43210, azrieli.2@osu.edu. I thank Nageeb Ali, Larry Ausubel, Simon Board, Yannai Gonczarowski, PJ Healy, Andrew McClellan, Jim Peck, Eran Shmaya, Ali Shourideh, Andy Skrzypacz, Ece Yegane, as well as participants at SITE dynamic games conference, Stony Brook game theory conference, Midwest theory conference, and UMD theory seminar for their comments and suggestions.}}

\maketitle


\bigskip
\bigskip

\begin{abstract}
Consider a population of agents who repeatedly compete for awards, as in the case of researchers annually applying for grants. Noise in the selection process may encourage entry of low quality proposals, forcing the principal to commit large resources to reviewing applications and further increasing award misallocation. A \emph{temporary exclusion} policy prohibits an agent from applying in the current period if they were rejected in the previous. We compare the steady state equilibria of the games with and without exclusion. Whenever the benefit from winning is sufficiently large exclusion results in more self-selection, eliminating entry of low quality applications. We extend the analysis to more general exclusion policies. We also show that exclusion has a distributional effect, where better able agents exhibit more self-selection. 




\end{abstract}

\newpage

\section{Introduction}\label{sec-introduction}

In many types of contests, a key challenge for the organizer is to efficiently allocate awards to the applications of highest quality. Consider for example the competition among scientists in a certain field for grants from a funding agency: The reviewing process only provides a noisy signal of the potential of each proposal, implying that some of the budget is misallocated.\footnote{For example, \citet{pier2018low} documents the high variance of evaluations when the same NIH grant proposal is reviewed by multiple reviewers.} Similar issues arise when an employer makes hiring decisions among a pool of job applicants, or when a company selects among proposed solutions in a crowdsourcing challenge. The noisiness of the allocation process provides incentives for agents with low quality applications to nevertheless `give it a shot', which in turn increases the volume of applications, forces the designer to commit more resources to the reviewing process, and potentially leading to even more allocation mistakes.\footnote{For a concrete example, consider the following quote from The Guardian newspaper (2011): ``At the height of last year's Gulf of Mexico oil disaster, BP invited members of the public to submit their ideas for sealing off its ruptured well and cleaning up the millions of barrels of oil that had leaked into water, marsh and beaches... Some 123,000 people responded from more than 100 countries... More than 100 experts reviewed more than 43,000 suggestions... Testing teams struggled to give people a fair hearing... The technical experts who reviewed the suggestions describe a giant sifting exercise that, in the end, yielded relatively little in the way of results.''} The incentive to enter the contest is particularly strong if the benefit that agents receive from winning is large.\footnote{Winning grants has a critical role in determining researchers' career prospects \citep{melkers202318}.} 

To mitigate these problems, the designer can try to encourage self-screening by increasing the cost of participation in the contest. For example, they can charge submission fees, require the applicant to meet certain criteria, or make the application process cumbersome. But such measures will likely fail if the prizes are high, and moreover may inadvertently `price out' the wrong applicants and discriminate against agents that have smaller budgets or lower ranks.\footnote{See for example \citet{oldford2023marginalizing} on the marginalizing effect that accounting and finance journals submission fees has on researchers with smaller research budgets, and \citet{cruz202317} which surveys the literature on gender and underrepresented minorities differences in research funding.}

In this paper we study an alternative way to encourage self-screening, taking advantage of the fact that an agent may wish to participate in similar future contests. Namely, we analyze policies according to which an agent may become temporarily ineligible to apply, depending on the history of their actions and outcomes. Such \emph{temporary exclusion} policies create a direct link between contests of different periods and introduce intertemporal incentives to an agent's participation decision.\footnote{We are aware of several institutions that use exclusion policies, presumably as a way to reduce the volume of applications and encourage self-selection. For example, the Graduate Research Fellowship Program (GRFP) of the NSF restricts students to apply at most once, see the \href{https://www.nsf.gov/pubs/2023/nsf23605/nsf23605.pdf}{2023 program solicitation}; leading management consulting firms such as Bain, BCG, and McKinsey ban rejected applicants from reapplying for a certain period of time, see the discussion in this \href{https://strategycase.com/ban-period-in-consulting-applications}{blog post}.}  Note that with exclusion the cost of participation is endogenous and depends both on the agent's own plan for future rounds as well as on the strategies employed by all other agents. A key insight of this paper is that, unlike with monetary submission fees, in the equilibrium with exclusion the cost of participation scales up linearly with the stakes of the contest, implying that low quality applicants self-select to stay out even in contests with arbitrarily high prizes. 


Our model is suitable for environments with the following key features. (i) A large population of agents repeatedly compete for awards; (ii) the quality of each agent's application is randomly drawn each period from a fixed distribution; (iii) there is a fixed budget of awards to be allocated each period and the designer's goal is to choose the highest quality applications; and (iv) the review process is such that higher quality applications are more likely to be selected but mistakes happen. Among these assumptions we view (ii) as the most restrictive since it rules out quality persisting across periods and the possibility of learning; it is however needed to keep the model tractable. 

We start by analyzing a benchmark model with no exclusion. The equilibrium is characterized by a cutoff quality $Q_0$ such that agents with qualities above $Q_0$ apply and those with qualities below $Q_0$ do not apply. This cutoff is the unique quality at which the expected payoff from applying, conditional on being the marginal quality to apply, is equal to the expected payoff from not applying (which is normalized to zero). In particular, as prizes increase more agents choose to apply and in the limit we get universal participation. 

We then move on to analyze a simple exclusion policy according to which an agent who gets rejected in a given period is excluded from participation in the next period, after which they are eligible again. We focus on this particular policy since it is straightforward to implement, relatively easy to analyze, and sufficient to illustrate the key trade-offs that arise with temporary exclusion. We look for `steady state equilibria' defined as stationary strategy profiles in which the quality distribution of applications remains constant over time, and in which each agent best responds to this recurrent distribution. Such equilibria are also characterized by a cutoff quality for entry, $Q_1$, but there are two key differences from the benchmark. On the one hand, the best-response of an agent to a given (stationary) quality distribution is more conservative than in the benchmark, i.e., exclusion indeed makes rejection more costly and encourages self-selection. But on the other hand, in a steady state a certain fraction of applications is rejected every period and hence that fraction of the population is ineligible. Thus, the competition that an agent faces is less intense, which could encourage lower quality applications. Each of these two forces may dominate, so both $Q_1>Q_0$ and $Q_0>Q_1$ are possible in general. However, we prove that whenever the benefit from winning is large enough the former is true, meaning that exclusion eliminates the left tail of the quality distribution. 

To evaluate the impact that exclusion has on the contest's outcome, we compare the quality distributions of applications and of winners between the benchmark equilibrium and the equilibrium under the above exclusion policy. Whenever $Q_1>Q_0$ (which is always the case for high prizes), the total volume of submissions with exclusion is lower, both because low quality applications between the two cutoffs are no longer submitted, and because some agents with high quality ideas above $Q_1$ are ineligible. The lower volume implies that agents' welfare is higher with exclusion. In terms of winners, exclusion eliminates funding for qualities below $Q_1$ since these are no longer being submitted, but at the same time reduces the share of the budget that is allocated to top-quality applications. Overall, in our model exclusion tends to shift budget from the two extremes to the middle range of qualities, see Proposition \ref{prop-noisy-review-comparison} for a formal result.



Exclusion policies can of course be generalized and the principal can tailor the policy to fit the particular characteristics of the contest environment. We study two such generalizations: In the first, applicants receiving low evaluations, below a bar set by the designer, are excluded in the next period. Thus, exclusion is not necessarily triggered by rejection. The analysis and results carry over to this kind of policies. Second, we analyze exclusion that lasts for more than one period. In Proposition \ref{prop-T-large} we show, somewhat unintuitively, that long bans lead to the principal's worst possible outcome in which almost all eligible agents apply (and win) every period.


Finally, our model assumes that all agents are identical in terms of the quality distribution they generate each period. In Section \ref{sec-heterogenous} we relax this assumption and show how to extend the analysis to heterogeneous populations. In particular, we prove that in a steady state equilibrium with exclusion agents that produce applications of (stochastically) higher quality are more conservative and use a higher entry cutoff. Thus, exclusion policies also induce interesting distributional effects in award allocation.   



\subsection{Literature review}\label{subsec-lit_rev} 

Our initial motivation comes from contests for research grants and for journal space (publications), see \citet{lepori2023handbook} for surveys of key issues in grant allocation.\footnote{As mentioned above, our model can also be applied to other types of contests, e.g., crowdsourcing contests. \citet{acar2019motivations} studies the challenges faced by organizations due to the high volume of low quality solutions submitted. See also \citet{chen2021attracting}, \citet{patel2023monetary}, and the references therein.} \citet{azar2006academic} discusses several ideas of how to make the academic review process more efficient, one of which is to encourage self-screening by limiting the number of rejections a given paper is allowed to receive; temporary exclusion that we study here is similar in spirit but applied at the researcher level rather than at the paper level. A recent paper by \citet{adda2024grantmaking} analyzes how noise in the review process and the formula for allocating funds between different research fields affect researchers' incentives to apply for grants; the special case of their model with a fixed budget is essentially the same as our benchmark. Other theory papers that study how features of the editorial process such as delay times and submission fees affect researchers' incentives to submit their work include \citet{leslie2005delays}, \citet{azar2007slowdown}, \citet{cotton2013submission}, and \citet{muller2021gatekeeper}. As far as we know, this is the first paper to consider repeated interaction in this context.

Temporary exclusion can be viewed as a way for the principal to link multiple decisions. There is substantial literature demonstrating the potential of such linkage to help principals achieve their goals, some prominent examples include \citet{casella2005storable} in the context of voting, \citet{jackson2007overcoming} in abstract social choice environments, and \citet{frankel2014aligned, frankel2016delegating, frankel2016discounted} for delegation problems. In all of these papers the principal does not get to observe any signal correlated with the agent's private information between periods (beyond the action taken by the agent), and the mechanism can be viewed as an ex-ante `budgeting' constraint on the number of times each action is taken. In contrast, the policies we consider condition eligibility on the outcome of the reviewing process.\footnote{One of the policies considered in subsection \ref{subsec-other-criteria} excludes all agents that applied in the previous period, regardless of the outcome of the evaluation process. The steady state quality distribution of winners under this policy is typically first-order dominated by the quality distribution of winners under policies that do condition eligibility on the signal obtained from the review.} Furthermore, our focus is on the steady states of the dynamic process induced by the mechanism, while the rest of this literature evaluates the performance of a mechanism by aggregating over the individual decision problems. See also Section \ref{sec-discussion} for further discussion.



There is a vast theoretical literature on contest design originating from the work of \citet{lazear1981rank}, where, unlike here, the goal of the designer is usually to maximize participants' effort. A main focus of this literature is the impact of the contest's prize structure on equilibrium efforts, e.g., \citet{moldovanu2001optimal, moldovanu2006contest}. \citet{olszewski2016large} show that the equilibria of contests with a large (but finite) number of participants can be approximated by the outcomes of a mechanism for a single agent with a continuum of types, and \citet{olszewski2020performance} apply this observation to derive the effort-maximizing prize structure in large contests. Several papers \citep{baye1993rigging, fullerton1999auctionin} show that higher effort can be induced by admitting only a subset of agents into the competition (i.e., exclude the others).\footnote{\citet{szymanski2003economic} surveys the literature on admission into sports competitions. See also the chapter on exclusion in \citet{konrad2009strategy}.} 

Much less work exists on repeated contests. \citet{meyer1991learning} studies the design of multi-stage contests in which the goal of the designer is to identify the better of two agents. \citet{krahmer2007equilibrium} analyzes infinitely repeated contests in which players learn about their own (fixed) ability over time. Unlike in these and similar papers, in our model agents' types are independently redrawn each period. \citet{de2021selecting} consider a principal that every period selects one of two agents to perform a task, and where the agents' private types are iid across time. They give a necessary and sufficient condition for the principal to be able to achieve the first-best payoff in equilibrium, and describe a principal's strategy that obtains the first-best. In our setup with many agents and no aggregate uncertainty the first-best can't be achieved since agents with lower types cannot be incentivized to stay out of the contest given that only the highest types participate.






\section{Preliminaries}\label{sec-preliminaries}

Throughout the paper we use the terminology of researchers competing for grants even though, as explained above, the model may describe other types of contests as well. 

\subsection{Environment}\label{subsec-environment}

Consider a large population of researchers, each producing one research idea per period. The quality (say, the potential benefit to society) of an idea is denoted by $q\in \RR$ and is assumed to be private information of the researcher.\footnote{All of our results go through if $q$ is a noisy estimate that a researcher has about the true underlying quality of their idea. However, such modification would change the interpretation of some of the results.} Each researcher chooses whether to submit their idea to be considered for funding by a central agency.
If an idea is submitted and funded then the benefit to the researcher is $V>0$; if the idea is submitted but not funded then the researcher suffers a loss of $C>0$; if a researcher chooses not to apply then their payoff is zero. Researchers discount future payoffs at a rate of $\delta\in (0,1)$. The agency has a total budget sufficient to fund a volume of $k\in (0,1)$ proposals.

Let $F(q)$ be a cumulative distribution function (cdf) of quality that admits a continuous and strictly positive density $f(q)$. We interpret $F$ as both the per-period distribution of quality in the population and as the belief of each researcher about the quality of ideas they produce each period. In particular, the overall quality distribution remains constant over time, all researchers are ex-ante identical, and there is no correlation across time in the quality that a researcher produces.

Let 
$$\Gamma_f = \{\phi:\RR\to \RR_+ ~:~ \phi \textit{ is measurable and } \phi\le f \}$$
be the set of non-negative functions bounded above by $f$. Each $\phi\in \Gamma_f$ can be thought of as a potential `density' of qualities of submitted ideas corresponding to some subset of the population choosing to apply for funding. We refer to elements of $\Gamma_f$ as quality distributions even though their integral is typically less than one; the volume of submissions under $\phi$ is $v(\phi):=\int \phi(q)dq$. We measure the distance between $\phi,\phi' \in \Gamma_f$ by $d(\phi,\phi')=\int|\phi(q)-\phi'(q)|dq$.\footnote{As is standard, we do not distinguish between $\phi$ and $\phi'$ if they are equal almost everywhere, so $d$ is a metric on $\Gamma_f$. This should not cause any confusion. Also, if a sequence $\{\phi_n\}\subseteq \Gamma_f$ converges almost everywhere to $\phi\in\Gamma_f$ then by Lebesgue's dominated convergence theorem we have that $d(\phi_n,\phi)\to 0$.} 

The probability that an application is successful depends on its own quality and on the overall quality distribution of submissions (the competition). We describe this probability by the function $W:\RR\times \Gamma_f \to [0,1]$, where $W(q,\phi)$ is the likelihood that an idea of quality $q$ gets funded given competition $\phi$.\footnote{The competition $\phi$ may affect the probability of success directly by crowding out proposals and indirectly by influencing the evaluation process.} We view $W$ as exogenously given and make the following assumptions:

\textbf{A1} $W$ is jointly continuous in $(q,\phi)$.

\textbf{A2} If $v(\phi)\le k$ then $W(q,\phi)=1$ for all $q$, and if $v(\phi)>k$ then $\int \phi(q)W(q,\phi)dq = k$.

\textbf{A3} If $v(\phi)>k$ then $\lim_{q\downarrow -\infty} W(q,\phi)=0$, $\lim_{q\uparrow +\infty} W(q,\phi)=1$, and $W(q,\phi)$ is strictly increasing in $q$.

\textbf{A4} If $\phi_1\le \phi_2$ then $W(q,\phi_1)\ge W(q,\phi_2)$ for each $q$.

\medskip

Assumption \textbf{A1} is for technical reasons. \textbf{A2} express the idea that the agency either funds all the applications (when the volume of application is less than the budget), or uses all its budget (when the volume of application is higher than the budget); indeed, $\phi(\cdot)W(\cdot,\phi)$ can be interpreted as the density of submissions being funded given $\phi$, so the integral is the overall volume funded. \textbf{A3} says that very bad ideas are unlikely to get funded and that the opposite is true for very good ideas; it also requires that higher quality is associated with higher chances of funding. Finally, \textbf{A4} implies that an application of a given quality has higher chances of being funded if there is less competition (fewer submissions of every quality). See subsection \ref{subsec-noisy-review} below for an example of a class of $W$ functions that satisfy these assumptions and can be interpreted as the result of a noisy review process of proposals.  



For any $Q\in \RR$ we denote by $f^Q\in \Gamma_f$ the quality distribution of submissions obtained when only ideas with quality at least $Q$ are submitted, namely,  
\begin{eqnarray*}
f^Q(q) =
\left\{ \begin{array}{ll}
0 & \textit{ if } ~~q<Q,\\
f(q) & \textit{ if } ~~q\ge Q.\\
\end{array} \right.
\end{eqnarray*}
We also allow for $Q=+\infty$ in which case $f^Q\equiv 0$ and $Q=-\infty$ in which case $f^Q=f$.

\subsection{Evaluating policies and the first-best}\label{subsec-evaluating}

Below we will study the effects that various exclusion policies have on the equilibrium outcomes of the contest. We choose not to commit to a particular objective function for the designer since their goals may widely vary depending on the context. Instead, we use various measures to compare policies.  

The equilibrium under a given policy can be summarized by the resulting quality distribution of submissions $\phi$. When combined with the (exogenously given) noise function $W$, one obtains the quality distribution of winners $\phi(\cdot)W(\cdot,\phi)$. The designer may wish to maximize the mean quality among winners, or, more generally, the mean of some increasing function of quality. Note that if the volumes of submissions $v(\phi), v(\phi')$ are both larger than $k$ then by assumption \textbf{A2} we have $\int \phi(q)W(q,\phi)dq = k =\int \phi'(q)W(q,\phi')dq$. Thus, we can use standard notions of stochastic dominance to compare such quality distributions. In particular, $\phi(\cdot)W(\cdot,\phi)$ first-order stochastically dominates $\phi'(\cdot)W(\cdot,\phi')$ if and only if the mean of any increasing function of quality is higher in the former.

The volume of submissions $v(\phi)$ may also enter the designer's objective. As discussed above, high volume likely requires costly resources to be used in the selection process. Thus, $\phi$ may be preferred over $\phi'$ if $v(\phi)$ is significantly smaller than $v(\phi')$, even if $\phi'$ induces a somewhat better quality distribution of winners.

Finally, agents' welfare may also be important for the principal. Since the volume of winners is always $k$ (so long as $v(\phi)\ge k$), welfare in each period is simply $kV-(v(\phi)-k)C$. Therefore, lower volume of submissions also implies higher welfare.

Let $Q^*=F^{-1}(1-k)$ be the top $k$ percentile of the quality distribution. We refer to $f^{Q^*}$ as the socially optimal or first-best quality distribution of submissions and to $Q^*$ as the socially optimal or first-best cutoff. Indeed, when submissions are given by $\phi=f^{Q^*}$ the quality distribution of funded ideas is $W(\cdot,f^{Q^*})f^{Q^*} = f^{Q^*}$ (by \textbf{A2}), which first-order stochastically dominates any other feasible distribution. Further, with $f^{Q^*}$ no submissions are rejected implying that reviewing resources are minimal and that researchers' welfare is maximized. 

\subsection{Probability of success as the outcome of noisy review}\label{subsec-noisy-review}

For some of the results and examples below we impose more structure on the function $W$. Let $G$ be a cdf that admits a continuous and everywhere positive density $g$. Suppose that the evaluation of an idea of quality $q$ generates a random signal $s\in \RR$ that is distributed according $G(s|q)=G(s-q)$. An idea gets accepted if the signal it generates exceeds a cutoff $\bar s$ that depends on the quality distribution of submissions $\phi$. 
 
Specifically, if $\phi\in \Gamma_f$ satisfies $v(\phi)>k$ then there is a unique cutoff signal $\bar s(\phi)$ such that\footnote{Indeed, when $\bar s \downarrow -\infty$ the integral converges to $v(\phi)>k$, and when $\bar s \uparrow +\infty$ the integral converges to $0$. It is also clear that the integral strictly decreases in $\bar s$ (follows from $G$ being strictly increasing) and that it is continuous in $\bar s$.}
\begin{equation}\label{eqn-signal-cutoff}
    \int \phi(q) (1-G(\bar s(\phi)|q)) dq =k.
\end{equation}
For every $\phi\in \Gamma_f$ with $v(\phi)>k$ and for every $q$ define $W(q,\phi) = 1-G(\bar s(\phi)|q)$, where $\bar s(\phi)$ is defined as the solution to (\ref{eqn-signal-cutoff}) (and define $W(q,\phi) =1$ for all $q$ whenever $v(\phi)\le k$). It is not hard to check that $W$ satisfies assumption \textbf{A1-A4} above.\footnote{One may also add the assumption that $\log(g)$ is concave to guarantee that signals satisfy the Monotone Likelihood Ratio Property, and hence that setting a lower bar for acceptance is justified. We do not require this as it is not needed for the analysis below.} 

When $W$ is obtained from some $G$ as described above we say that it is a \emph{Random Evaluation Function (REF)}.\footnote{\citet{azrieli2024success} axiomatize this class of contest success functions; \citet{morgan2018ponds} and \citet{adda2024grantmaking} use them in their models.}  The special case in which both $F$ and $G$ are normal distributions is referred to as the `normal-normal model'.

\section{Benchmark without exclusion}\label{sec-benchmark}

We start by analyzing a benchmark where researchers are free to submit their ideas every period regardless of their submission and outcome history. Since in this case there is no direct link between periods, we focus on `stationary equilibria' in which researchers choose the same submission strategy in every period.\footnote{This is justified in our context of a large population of researchers, since more sophisticated history-dependent equilibrium profiles may be hard to sustain. See \citet{green1980noncooperative} and \citet{sabourian1990anonymous} for formal arguments in this spirit.}

Suppose that the per-period quality distribution of submissions is $\phi\in \Gamma_f$. A researcher that has an idea of quality $q$ and chooses to submit receives an expected payoff of $W(q,\phi)V-(1-W(q,\phi))C$, while choosing not to submit results in a sure payoff of zero. We distinguish between two possible cases: First, if $v(\phi)\le k$ then by \textbf{A2} $W(q,\phi)=1$ for all $q$ so that the unique best response to $\phi$ is to always submit. Second, if $v(\phi)>k$ then by \textbf{A3} $W(q,\phi)$ is strictly increasing and converges to zero and one at negative infinity and at infinity, respectively. Further, by \textbf{A1} $W$ is continuous in $q$. Thus, there is a unique $Q\in \RR$ satisfying $W(Q,\phi)V-(1-W(Q,\phi))C= 0$, so that the best response is to submit if $q>Q$ and not to submit if $q<Q$ (at the threshold $Q$ researchers are indifferent; for expositional reasons we assume throughout that the tie-breaking rule favors submission). 

Since in equilibrium all researchers best respond to the same $\phi$, and since we just argued that the best response to any $\phi$ is unique, we may without loss restrict attention to symmetric equilibria where all researchers use the same submission strategy. If we identify the strategy `never submit' with the cutoff $Q=+\infty$, and the strategy `always submit' with the cutoff $Q=-\infty$, then we may also restrict attention to equilibria in cutoff strategies.

Now, if all researchers use cutoff $Q$, then the resulting quality distribution of submissions is $\phi=f^Q$. This leads to the following definition.

\begin{definition}\label{def-benchmark-eq}
A cutoff $Q_0\in \RR \cup \{\pm \infty\}$ is an equilibrium of the benchmark case if the strategy defined by $Q_0$ is the best response to $f^{Q_0}$ 
\end{definition}


\begin{proposition}\label{prop-benchmark}
There exists a unique equilibrium cutoff $Q_0$. This equilibrium satisfies $-\infty< Q_0< Q^*$ and is characterized by the equation 
\begin{equation}\label{eqn-benchmark}
    W(Q_0,f^{Q_0})=\frac{C}{C+V}. 
\end{equation} 
\end{proposition}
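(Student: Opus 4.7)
The plan is to reduce the equilibrium condition to a single-variable equation, then establish existence by the intermediate value theorem and uniqueness by a strict monotonicity argument.

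First I would argue that the best-response analysis preceding the proposition already characterizes every candidate equilibrium: when $v(f^{Q_0})>k$, a researcher with idea $q$ strictly prefers to submit iff $W(q,f^{Q_0})V-(1-W(q,f^{Q_0}))C>0$, and since by \textbf{A3} the function $W(\cdot,f^{Q_0})$ is strictly increasing from $0$ to $1$ and by \textbf{A1} continuous, the unique cutoff best response satisfies equation (\ref{eqn-benchmark}). I would then rule out the two degenerate cases: $Q_0=+\infty$ gives $f^{Q_0}\equiv 0$, so $v=0\le k$ and by \textbf{A2} $W\equiv 1$, making the best response to always submit; and $Q_0=-\infty$ gives $f^{Q_0}=f$ with $v(f)=1>k$, so by the above the best response is a finite cutoff. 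Hence any equilibrium must be a finite $Q_0$ satisfying (\ref{eqn-benchmark}).

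Next, define $H(Q)=W(Q,f^Q)-\tfrac{C}{C+V}$ on $\RR$ and prove existence via IVT. Continuity of $H$ follows from \textbf{A1} together with the observation (justified by dominated convergence, noted in the paper) that $Q\mapsto f^Q$ is continuous in the metric $d$. At $Q=Q^*$ we have $v(f^{Q^*})=k$, so \textbf{A2} gives $W(Q^*,f^{Q^*})=1$ and thus $H(Q^*)=\tfrac{V}{C+V}>0$. For the lower bound I would show that $W(Q,f^Q)\to 0$ as $Q\to-\infty$: fixing $\varepsilon>0$, pick $q_\varepsilon$ with $W(q_\varepsilon,f)<\varepsilon/2$ using \textbf{A3}; for $Q<q_\varepsilon$, monotonicity in $q$ (\textbf{A3}, which applies since $v(f^Q)>k$) gives $W(Q,f^Q)\le W(q_\varepsilon,f^Q)$, and continuity in $\phi$ together with $f^Q\to f$ in $d$ as $Q\to-\infty$ makes the right-hand side arbitrarily close to $W(q_\varepsilon,f)<\varepsilon/2$. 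Hence $H(Q)<0$ for $Q$ sufficiently small, and IVT yields a root.

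For uniqueness and the location of the root, I would show that $Q\mapsto W(Q,f^Q)$ is strictly increasing on $(-\infty,Q^*)$. Given $Q_1<Q_2<Q^*$, the pointwise inequality $f^{Q_1}\ge f^{Q_2}$ combined with \textbf{A4} gives $W(Q_1,f^{Q_1})\le W(Q_1,f^{Q_2})$; since $v(f^{Q_2})>k$, \textbf{A3} gives $W(Q_1,f^{Q_2})<W(Q_2,f^{Q_2})$, establishing strict monotonicity. Together with $W(Q,f^Q)=1$ for all $Q\ge Q^*$ this pins down a unique root $Q_0$, which lies strictly below $Q^*$ because $\tfrac{C}{C+V}<1$, and strictly above $-\infty$ by the limit computation above.

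The only step that requires genuine care is showing $W(Q,f^Q)\to 0$ as $Q\to -\infty$, because both arguments of $W$ move simultaneously; the resolution is the interlocking use of \textbf{A3} (to bound in the first coordinate) and \textbf{A1} together with $d$-continuity of $Q\mapsto f^Q$ (to pass to the limit in the second coordinate). Everything else is essentially bookkeeping around the assumptions \textbf{A1}--\textbf{A4}.
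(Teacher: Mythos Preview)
Your proposal is correct and follows essentially the same route as the paper's proof: both define $l(Q)=W(Q,f^Q)$, establish continuity via \textbf{A1}, strict monotonicity on $(-\infty,Q^*)$ via \textbf{A3}--\textbf{A4}, the limit value $1$ at $Q^*$ via \textbf{A2}, and the limit $0$ at $-\infty$ via the same two-step argument interlocking \textbf{A3} with continuity in $\phi$. The only cosmetic difference is that the paper rules out all $Q\ge Q^*$ in one stroke while you single out $Q=+\infty$ and pick up the finite range $[Q^*,\infty)$ later; be sure to state that exclusion explicitly before asserting that every equilibrium satisfies~(\ref{eqn-benchmark}).
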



All proofs not in the main text appear in Appendix \ref{sec-proofs}. Since the cutoff $Q_0$ is smaller than the first-best $Q^*$, in equilibrium there is excess supply of submissions, $v(f^{Q_0})=1-F(Q_0)>1-F(Q^*)=k$, and the resulting quality distribution of funded ideas $f^{Q_0}(\cdot)W(\cdot,f^{Q_0})$ is first-order dominated by $f^{Q^*}(\cdot)W(\cdot,f^{Q^*})=f^{Q^*}$. From the researchers' point of view, the equilibrium per-period welfare is $kV-(1-F(Q_0)-k)C$, strictly smaller than $kV$ achieved at the first-best. See Figure \ref{figure-Benchmark} for an illustration in the normal-normal model.

As $V$ increases (or $C$ decreases) the equilibrium cutoff $Q_0$ decreases and the inefficiency becomes more severe. In the limit when $V$ grows to infinity, there is no self-selection and we converge to universal participation: The quality distribution of submissions converges to $f$, and the quality distribution of funded ideas converges to $W(\cdot,f)f(\cdot)$.

\begin{figure}
\centering
\includegraphics[width=.33\textwidth]{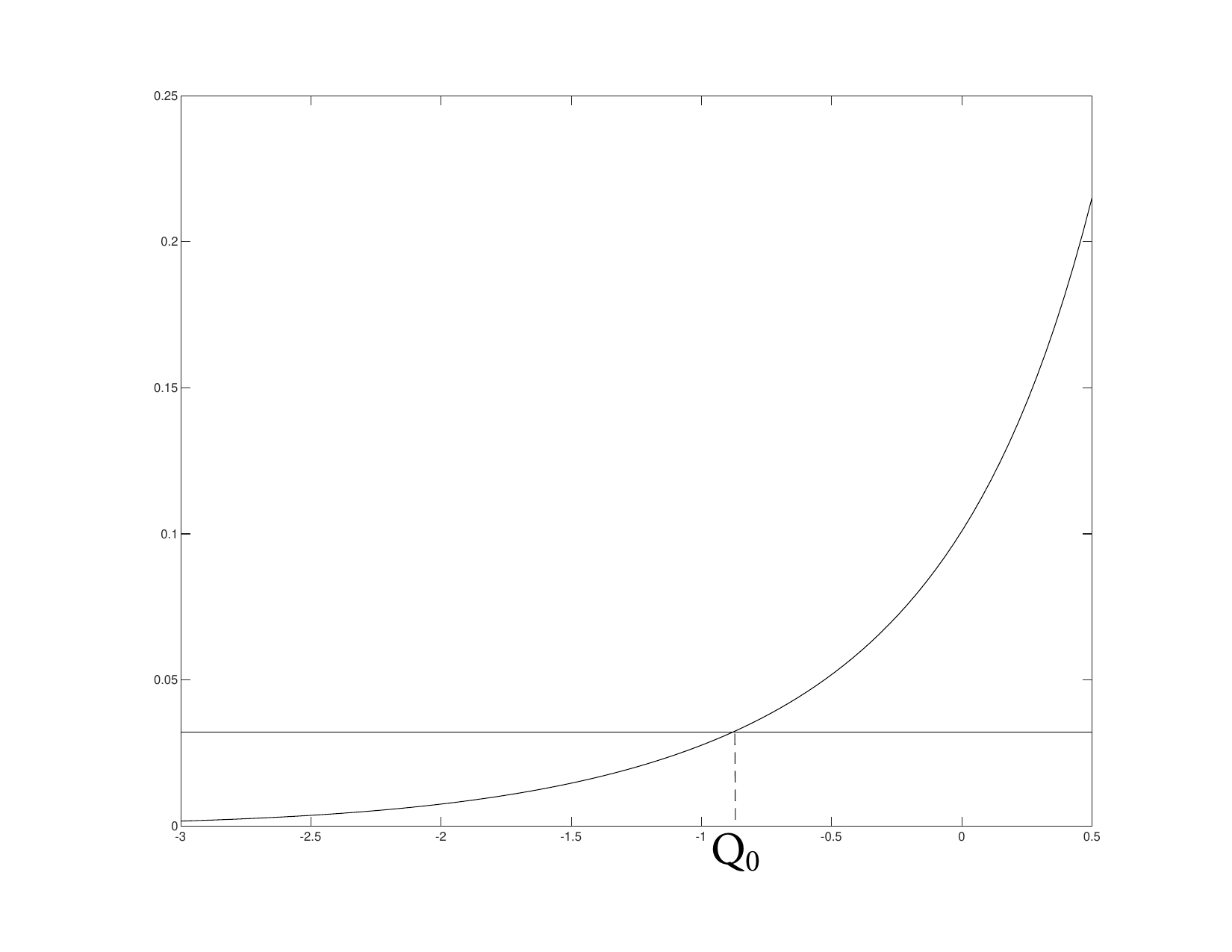}\hfill
\includegraphics[width=.33\textwidth]{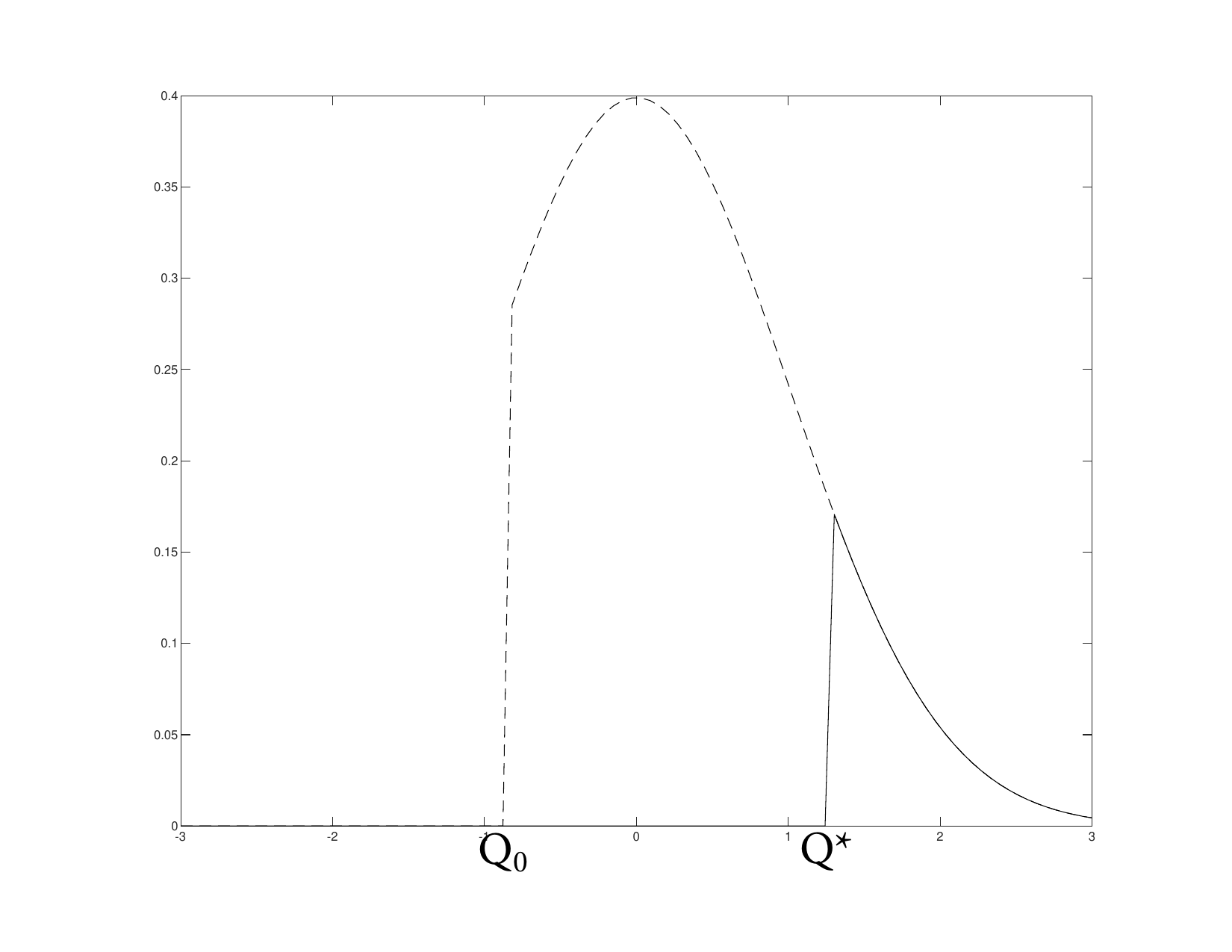}\hfill
\includegraphics[width=.33\textwidth]{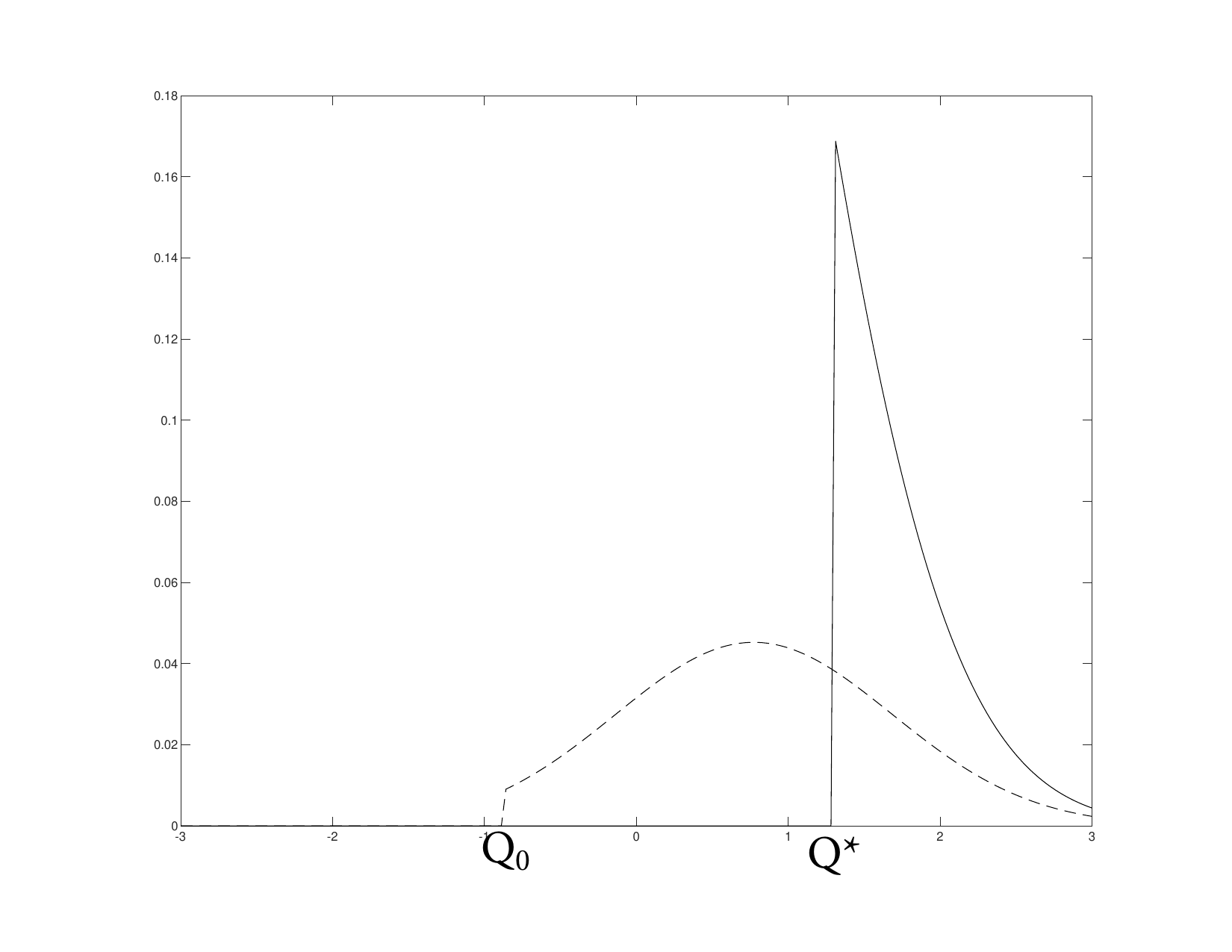}
\caption{The figure compares the first-best with the benchmark equilibrium in the normal-normal model with parameters $\mu_q=0$, $\sigma^2_q=1$, $\sigma^2_s=2$, $C=1$, $V=30$, and $k=0.1$. The left panel graphs the two sides of equation (\ref{eqn-benchmark}), so that the intersection point is the benchmark equilibrium cutoff $Q_0$; the middle panel shows the induced quality distribution of submissions in the benchmark equilibrium (dashed) and in the first-best (solid); the right panel shows the corresponding quality distributions of funded ideas.}
\label{figure-Benchmark}
\end{figure}



\section{Equilibrium with exclusion}\label{sec-steady-states-conditional}

Consider now a policy according to which a researcher who submitted their idea and got rejected is excluded from participation in the following period, after which they are eligible again. We focus on this specific exclusion policy because it is simple to implement and captures the key trade-offs involved; in the next section we consider more general policies. With such policy in place there is a direct link between periods, and submission decision of a researcher depends also on their expected payoff in future rounds. We study `steady state equilibria' of this dynamic game. In a steady state equilibrium the distribution of quality of submissions $\phi$ remains constant over time, and each researcher maximizes their life-time expected payoff.

\subsection{Best response}
We start by formalizing the utility maximization problem of a researcher who anticipates that a given quality distribution $\phi$ is submitted each period. We focus on stationary cutoff strategies characterized by a threshold $Q$, such that in every period the researcher submits their idea if $q\ge Q$ and does not submit if $q<Q$.\footnote{The restriction to these kind of strategies is without loss of generality in the following sense: First, it follows from \citet[Theorem 7]{blackwell1965discounted} that there is an optimal stationary strategy, so that the researcher cannot increase their utility by using history-dependent or time-dependent strategies. Second, it is not hard to show using assumption \textbf{A3} that an optimal stationary strategy must be a cutoff strategy. We omit the details.} As before, $Q=\pm\infty$ are allowed. Let $x(Q,\phi)$ be the total (current and future) expected payoff of a researcher who is eligible to submit in the current period and follows the stationary strategy with cutoff $Q$, assuming that $\phi$ repeats forever. Formally, $x(Q,\phi)$ is the solution of the equation
\begin{equation}\label{eqn-steady-state-payoff}
    x = F(Q)(0+\delta x) + \int_Q^{\infty} f(q)\left[W(q,\phi)(V+\delta x) + (1-W(q,\phi))(-C+\delta^2 x) \right] dq.
\end{equation}    

To interpret this equation, when $q<Q$ (which happens with probability $F(Q)$) the researcher does not submit and therefore gets a payoff of zero in the current period and will be eligible in the next period; when $q>Q$ the researcher does submit, wins with probability $W(q,\phi)$ for a payoff of $V$ today and continued eligibility in the next period, or gets rejected with probability $1-W(q,\phi)$ for a loss of $C$ today and will also have to sit out the next period.  

To simplify the notation we let $Win(Q,\phi) = \int_Q^{\infty} f(q)W(q,\phi)dq$ be the ex-ante winning probability of the researcher. 
Solving (\ref{eqn-steady-state-payoff}) for $x$ we get 
\begin{equation}\label{eqn-steady-state-payoff2}
x(Q,\phi) =\frac{Win(Q,\phi)V-(1-F(Q)-Win(Q,\phi))C}{(1-\delta)[1 + \delta(1-F(Q)-Win(Q,\phi))]}.
\end{equation}


Say that $Q$ is a best response to $\phi$ if $Q\in \argmax_{Q'} x(Q',\phi)$. The next lemma uses a version of the `one-shot deviation principle' to characterize the best response function.

\begin{lemma}\label{lemma-best-response}
(i) If $v(\phi)\le k$ then $Q=-\infty$ (always submit) is the unique best response.\\
(ii) If $v(\phi)> k$ then there is a unique best response threshold $Q\in \RR$. Moreover, $Q$ is the best response to $\phi$ if and only if 
    \begin{equation}\label{eqn-best-response}
       x(Q,\phi)> 0 ~~~~\textit{and}~~~~ W(Q,\phi)= \frac{C+\delta(1-\delta)x(Q,\phi)}{C+\delta(1-\delta)x(Q,\phi)+V}.
    \end{equation}
\end{lemma}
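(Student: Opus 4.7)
For part (i), when $v(\phi)\le k$, assumption \textbf{A2} gives $W(q,\phi)=1$ for every $q$, so submitting yields per-period payoff $V>0$ while preserving eligibility, whereas not submitting yields zero and also preserves eligibility. Hence submitting strictly dominates state by state, and the unique best response is $Q=-\infty$.

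For part (ii), my plan is to recast the researcher's problem as a two-state (eligible/ineligible) Markov decision process and apply the one-shot deviation principle. Denote the optimal continuation values by $V_e$ and $V_n$. Since the transition from ineligible is deterministic, $V_n=\delta V_e$, and the eligibility Bellman equation reads
\[V_e = \int f(q)\max\bigl\{W(q,\phi)(V+\delta V_e)+(1-W(q,\phi))(-C+\delta^{2} V_e),\; \delta V_e\bigr\}dq.\]
The right-hand side defines a $\delta$-contraction in $V_e$, so it admits a unique fixed point $V_e^{*}$, and the footnote's appeal to \citet[Theorem 7]{blackwell1965discounted} delivers an optimal stationary policy attaining $V_e^{*}$. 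To rule out $V_e^{*}=0$, I would exhibit a strategy with strictly positive value: by \textbf{A3} there is a $Q$ large enough that $W(q,\phi)>C/(C+V)$ for all $q>Q$, so the corresponding cutoff strategy has strictly positive expected per-submission payoff, whence $x(Q,\phi)>0$ and $V_e^{*}\ge x(Q,\phi)>0$.

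At the optimal continuation value $V_e^{*}$, submitting at quality $q$ is weakly preferred iff
\[W(q,\phi)(V+\delta V_e^{*})+(1-W(q,\phi))(-C+\delta^{2} V_e^{*})\ge \delta V_e^{*},\]
which rearranges to
\[W(q,\phi)\ge \frac{C+\delta(1-\delta)V_e^{*}}{C+\delta(1-\delta)V_e^{*}+V}.\]
By \textbf{A3}, $W(\cdot,\phi)$ is strictly increasing and continuous with limits $0$ and $1$, which pins down a unique threshold $Q^{*}\in\RR$ at which equality holds and confirms that the best response takes the cutoff form. By construction $x(Q^{*},\phi)=V_e^{*}>0$, which yields the characterization in (\ref{eqn-best-response}).

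The step I expect to be most delicate is the converse direction of the iff: verifying that any $Q$ satisfying the two conditions in (\ref{eqn-best-response}) must coincide with $Q^{*}$. Here I would argue that given $x(Q,\phi)>0$ the cutoff equation makes $Q$ satisfy the one-shot optimality condition relative to its own induced value, and since the Bellman optimality value is the unique fixed point $V_e^{*}$, we must have $x(Q,\phi)=V_e^{*}$; strict monotonicity of $W(\cdot,\phi)$ then forces $Q=Q^{*}$, establishing both uniqueness of the best response and the equivalence.
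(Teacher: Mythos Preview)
Your proof is correct and, like the paper's, rests on the one-shot deviation/unimprovability principle for discounted MDPs; the difference is organizational. The paper establishes the biconditional in (\ref{eqn-best-response}) by treating the two directions separately: sufficiency via \citet[Theorem~1]{blackwell1962discrete} (no profitable one-shot deviation implies optimality), and necessity via a variational argument that compares the optimal cutoff's value to one-step deviations to nearby cutoffs $Q'$ and passes to the limit $Q'\to Q$ to extract the equality; existence is then handled by an intermediate-value argument directly on the threshold equation. You instead start from the Bellman contraction to pin down the optimal value $V_e^*$, read off the unique threshold $Q^*$ from \textbf{A3}, and obtain necessity and existence simultaneously; your converse (that any $Q$ satisfying (\ref{eqn-best-response}) is unimprovable, hence its value solves the Bellman equation, hence $x(Q,\phi)=V_e^*$ and $Q=Q^*$) is exactly the paper's sufficiency argument. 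The paper's route avoids explicitly invoking the contraction property, while yours packages existence, uniqueness, and necessity more compactly; both are valid and ultimately the same machinery.
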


The intuition for the equation in (\ref{eqn-best-response}) is straightforward: Recall from the previous Section \ref{sec-benchmark} that without exclusion the optimal cutoff given $\phi$ is characterized by the equality $W(Q,\phi)= \frac{C}{C+V}$. Here, the cost of rejection is not only the immediate loss $C$, but also the inability to submit in the next period. This latter cost is precisely $\delta x(Q,\phi)-\delta^2 x(Q,\phi) = \delta(1-\delta)x(Q,\phi)$, since instead of having a continuation payoff of $\delta x(Q,\phi)$ the researcher will have a continuation payoff of $\delta^2 x(Q,\phi)$. Unlike in the benchmark case, the cost of rejection here is endogenous and depends both on the competition and on the researcher's own continuation strategy. Note also that, since $x(Q,\phi)>0$ at the optimum, the best response cutoff $Q$ to $\phi$ is higher with exclusion than without, that is, researchers are more conservative and only submit ideas of higher quality.   

By plugging the expression for $x(Q,\phi)$ from (\ref{eqn-steady-state-payoff}) into the optimality conditions in (\ref{eqn-best-response}) we obtain the following equivalent conditions that will be useful below.

\begin{corollary}\label{coro-best-response}
     Assume $v(\phi)> k$. Then $Q\in \RR$ is the best response to $\phi$ if and only if
    \begin{equation}\label{eqn-best-response-alt}
      \frac{V}{C} > \frac{1-F(Q)-Win(Q,\phi)}{Win(Q,\phi)} ~~~~\textit{and}~~~~~  W(Q,\phi)= \frac{C+\delta Win(Q,\phi)}{C+(1+\delta-\delta F(Q))V}.
    \end{equation}
\end{corollary}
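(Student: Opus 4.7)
The plan is to prove Corollary \ref{coro-best-response} by pure algebraic substitution of the closed-form expression for $x(Q,\phi)$ from (\ref{eqn-steady-state-payoff2}) into the two conditions of Lemma \ref{lemma-best-response}(ii). There is no new economic content beyond the lemma; the corollary just rewrites the optimality conditions in a form that makes $x$ implicit and eliminates the nested $W$--$x$ coupling.

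For the first condition, I would translate $x(Q,\phi)>0$ directly. The denominator $(1-\delta)[1+\delta(1-F(Q)-Win(Q,\phi))]$ in (\ref{eqn-steady-state-payoff2}) is strictly positive, so $x(Q,\phi)>0$ holds iff the numerator is positive, i.e., $Win(Q,\phi)\,V-(1-F(Q)-Win(Q,\phi))\,C>0$. Dividing by $C\cdot Win(Q,\phi)$ (which is positive because $v(\phi)>k$ implies $Win(Q,\phi)>0$ by assumption \textbf{A3}) yields exactly the inequality $\tfrac{V}{C}>\tfrac{1-F(Q)-Win(Q,\phi)}{Win(Q,\phi)}$.

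For the second condition, I would introduce shorthand $W_0=Win(Q,\phi)$ and $R=1-F(Q)-W_0$ and compute
\[
\delta(1-\delta)x(Q,\phi)=\frac{\delta\,[W_0 V-RC]}{1+\delta R}.
\]
Adding $C$ and putting over a common denominator produces $C+\delta(1-\delta)x(Q,\phi)=\tfrac{C+\delta V W_0}{1+\delta R}$, since the $\delta RC$ terms cancel. Adding $V$ to this and using the identity $W_0+R=1-F(Q)$ gives
\[
C+\delta(1-\delta)x(Q,\phi)+V=\frac{C+V\bigl(1+\delta-\delta F(Q)\bigr)}{1+\delta R}.
\]
Taking the ratio of these two displays, the common denominator $1+\delta R$ cancels and the equation in (\ref{eqn-best-response}) becomes precisely the second equation of (\ref{eqn-best-response-alt}).

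The only ``obstacle'' is bookkeeping: tracking that $F(Q)+W_0+R=1$ in the correct place so that the $V(W_0+R)$ terms collapse to $V(1-F(Q))$ in the denominator, and noting that the equivalence is iff because every step (multiplying by strictly positive quantities, combining fractions) is reversible. Nothing here requires new use of assumptions \textbf{A1}--\textbf{A4} beyond what Lemma \ref{lemma-best-response} already invoked.
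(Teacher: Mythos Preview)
Your proposal is correct and follows exactly the paper's approach: substitute the closed form for $x(Q,\phi)$ from (\ref{eqn-steady-state-payoff2}) into the two conditions of Lemma~\ref{lemma-best-response}(ii) and simplify. Your algebra in fact yields $C+\delta V\,Win(Q,\phi)$ in the numerator---which is the correct expression, as confirmed by its use in the proof of Theorem~\ref{thm-steady-state}---so the printed statement of the corollary appears to omit a factor of $V$.
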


\subsection{Steady states}

The next step towards the definition of steady state equilibrium is to identify those quality distributions of submissions that recur every period given a particular strategy profile. Since we have just shown that the best response to any recurring $\phi$ is unique and characterized by a cutoff, we may restrict attention to profiles in which all researchers follow the same cutoff strategy.

Suppose that $Q$ is the common cutoff used by all researchers. Suppose further that in a certain period the fraction of eligible researchers is $\alpha\in [0,1]$. Note that this means that the quality distribution of submissions in this period is $\alpha f^Q$, so in particular the volume of submissions is $\alpha(1-F(Q))$. We distinguish between two possible cases: (i) If $\alpha(1-F(Q))\le k$ then there will be no rejections and hence full eligibility in the next period. (ii) If $\alpha(1-F(Q))> k$ then the volume of rejections is $\alpha(1-F(Q))- k$ and therefore the fraction of eligible researchers in the next period is $1-[\alpha(1-F(Q))- k]$. 

Now, for $\alpha f^Q$ to also be the quality distribution in the next period we need that the fraction of eligible researchers remains $\alpha$. This happens either if $\alpha=1$ and $Q\ge Q^*$ (corresponding to case (i)), or if $\alpha=1-[\alpha(1-F(Q))- k]$ and $Q< Q^*$ (corresponding to case (ii)). The latter is the same as $\alpha=\frac{1+k}{2-F(Q)}$.

To summarize, the steady state quality distribution of submissions induced by the cutoff $Q$ is given by 
\begin{eqnarray}\label{eqn-submissions}
\phi^Q :=
\left\{ \begin{array}{ll}
\frac{1+k}{2-F(Q)}f^Q & \textit{ if } ~~ Q <Q^*,\\
f^Q & \textit{ if } ~~Q \ge Q^*.\\
\end{array} \right.
\end{eqnarray}
We remark that (\ref{eqn-submissions}) naturally extends to the cases $Q=\pm \infty$. Note also that a higher cutoff $Q$ is associated with higher steady state eligibility and with lower steady state submission volume. 


\subsection{Equilibrium}

\begin{definition}\label{def-eq-with-bans}
A cutoff $Q_1\in \RR \cup \{\pm \infty\}$ is a steady state equilibrium with exclusion if $Q_1$ is the best response to $\phi^{Q_1}$ given in (\ref{eqn-submissions}).
\end{definition}

\begin{theorem}\label{thm-steady-state}
    Suppose that $\frac{V}{C}\ge \frac{1-k}{2k}$. Then there exists a steady state equilibrium with exclusion. If $Q_1$ is an equilibrium cutoff then $-\infty<Q_1<Q^*$. Moreover, $Q_1$ is an equilibrium cutoff if and only if
    \begin{equation}\label{eqn-steady-state-eq}
    W(Q_1,\phi^{Q_1})= \frac{(1+k)C+k\delta (2-F(Q_1))V}{(1+k)C+(1+k)(1+\delta -\delta F(Q_1))V}.
    \end{equation}
\end{theorem}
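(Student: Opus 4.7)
The plan is to reduce the equilibrium condition to a single scalar equation on $(-\infty, Q^*)$ and then show that this equation has a solution via the intermediate value theorem, relying throughout on Lemma \ref{lemma-best-response} and Corollary \ref{coro-best-response}.

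First I would establish that any equilibrium cutoff $Q_1$ must lie strictly between $-\infty$ and $Q^*$. If $Q_1 \ge Q^*$, then $v(\phi^{Q_1}) = 1 - F(Q_1) \le k$, so Lemma \ref{lemma-best-response}(i) forces the unique best response to be $-\infty$, contradicting $Q_1 \ge Q^*$. If $Q_1 = -\infty$, then $v(\phi^{Q_1}) = (1+k)/2 > k$, so Lemma \ref{lemma-best-response}(ii) makes the best response a finite number, again a contradiction. For $Q_1 \in (-\infty, Q^*)$, applying \textbf{A2} to $\phi^{Q_1} = \frac{1+k}{2 - F(Q_1)} f^{Q_1}$ yields
\[
Win(Q_1, \phi^{Q_1}) = \frac{k(2 - F(Q_1))}{1 + k}.
\]
Substituting this into the best-response equation from Corollary \ref{coro-best-response} and clearing the common factor $(1+k)$ in numerator and denominator gives exactly (\ref{eqn-steady-state-eq}).

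Next I would verify that the accompanying strict inequality in Corollary \ref{coro-best-response} is automatic under the hypothesis $V/C \ge (1-k)/(2k)$. After the same substitution the inequality becomes $V/C > (1 - k - F(Q_1))/(k(2 - F(Q_1)))$. A short calculation shows the right-hand side is strictly decreasing in $F(Q_1)$ on $[0, 1-k)$ with supremum $(1-k)/(2k)$ attained only in the limit $F(Q_1) \to 0$. Since $f$ is strictly positive, $F(Q_1) > 0$ for every finite $Q_1$, so the strict inequality follows from the hypothesis. Hence on $(-\infty, Q^*)$ the full equilibrium characterization collapses to the single equation (\ref{eqn-steady-state-eq}).

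For existence I would apply the intermediate value theorem to
\[
\Delta(Q) := W(Q, \phi^Q) - \frac{(1+k)C + k\delta(2 - F(Q))V}{(1+k)C + (1+k)(1 + \delta - \delta F(Q))V}.
\]
At $Q = Q^*$, $\phi^{Q^*} = f^{Q^*}$ has volume exactly $k$, so \textbf{A2} gives $W(Q^*, f^{Q^*}) = 1$, while the RHS simplifies to $(C + k \delta V)/(C + (1 + k\delta)V) < 1$, yielding $\Delta(Q^*) > 0$. As $Q \to -\infty$, $\phi^Q \to \psi := \frac{1+k}{2} f$ in the $d$-metric by dominated convergence, and $v(\psi) = (1+k)/2 > k$; combining \textbf{A3} (which produces a $q_0$ with $W(q_0, \psi)$ arbitrarily small) with the joint continuity in \textbf{A1} and monotonicity of $W$ in $q$ yields $W(Q, \phi^Q) \to 0$, while the RHS stays bounded away from zero, so $\Delta(Q) < 0$ for $Q$ sufficiently negative. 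Continuity of $\Delta$ on $(-\infty, Q^*)$ follows from \textbf{A1} together with continuity of $Q \mapsto \phi^Q$ in $d$, so IVT produces a solution. The step I expect to require the most care is this simultaneous-limit argument at $-\infty$: \textbf{A1} only gives continuity at interior points of $\RR \times \Gamma_f$, and one must weave it together with the tail behavior guaranteed by \textbf{A3} to control both arguments at once.
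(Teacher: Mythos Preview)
Your proposal is correct and follows essentially the same route as the paper: rule out $Q_1\ge Q^*$ and $Q_1=-\infty$ via Lemma~\ref{lemma-best-response}, compute $Win(Q,\phi^Q)=k(2-F(Q))/(1+k)$ from \textbf{A2} and substitute into Corollary~\ref{coro-best-response} to obtain (\ref{eqn-steady-state-eq}), check that the inequality in Corollary~\ref{coro-best-response} is automatic under $V/C\ge (1-k)/(2k)$, and then run the intermediate value argument using the limits of $W(Q,\phi^Q)$ at $-\infty$ and $Q^*$. Your treatment of the $Q\to-\infty$ limit (fixing $q_0$ via \textbf{A3}, passing $\phi^Q\to\frac{1+k}{2}f$ via \textbf{A1}, then using monotonicity in $q$) is exactly the ``same arguments'' the paper invokes by pointing to the proof of Proposition~\ref{prop-benchmark}.
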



\begin{proof}
    We postpone the proof of existence and first argue the other parts of the theorem. Suppose that $Q_1$ is an equilibrium cutoff. If $Q_1\ge Q^*$ then $v(\phi^{Q_1})\le k$. By part (i) of Lemma \ref{lemma-best-response} the best response to $\phi^{Q_1}$ is to always submit ($Q=-\infty)$, so $Q_1$ is not a best response to $\phi^{Q_1}$, contradiction. Next, suppose that $Q_1=-\infty$. Then $v(\phi^{Q_1})=\frac{1+k}{2}>k$, so again by Lemma \ref{lemma-best-response} $Q_1$ is not the best response to $\phi^{Q_1}$. We conclude that $-\infty<Q_1<Q^*$.

    Next, we show that (\ref{eqn-steady-state-eq}) is necessary and sufficient for $Q_1$ to be an equilibrium. By Corollary \ref{coro-best-response}, $Q$ is the best response to $\phi^Q$ if and only if 
    $$\frac{V}{C} > \frac{1-F(Q)-Win(Q,\phi^Q)}{Win(Q,\phi^Q)} ~~~\textit{and}~~~ W(Q,\phi^{Q})= \frac{C+\delta Win(Q,\phi^{Q})}{C+(1+\delta-\delta F(Q))V}.$$
    We have
    $$Win(Q,\phi^Q) = \int_Q^{\infty} f(q)W(q,\phi^Q)dq = \frac{2-F(Q)}{1+k} \int_{-\infty}^{\infty} \phi^{Q}(q) W(q,\phi^Q)dq = \frac{k(2-F(Q))}{1+k},$$
    where the second equality is from (\ref{eqn-submissions}), and the last equality is by \textbf{A2}. Plugging this into the above best-response conditions and rearranging gives
    $$\frac{V}{C} > \frac{1-k-F(Q)}{k(2-F(Q))} ~~~\textit{and}~~~ W(Q,\phi^{Q})= \frac{(1+k)C+k\delta (2-F(Q))V}{(1+k)C+(1+k)(1+\delta -\delta F(Q))V}.$$
    The first condition trivially holds for any $Q$ by our assumption that $\frac{V}{C}\ge \frac{1-k}{2k}$, and the second condition is precisely (\ref{eqn-steady-state-eq}).
    
    
    Finally, to show existence we follow the footsteps of the proof in Proposition \ref{prop-benchmark}. By the same arguments as in that proof we have $\lim_{Q\downarrow -\infty} W(Q,\phi^Q) = 0$ and $\lim_{Q\uparrow Q^*} W(Q,\phi^Q) = 1$. Also, $W(Q,\phi^Q)$ is continuous in $Q$ on $(-\infty,Q^*)$. The right-hand side of (\ref{eqn-steady-state-eq}) is also continuous in $Q$, and it is easy to see that it is bounded between $\frac{C}{C+(1+\delta)V}$ and $\frac{C+\delta V}{C+(1+\delta k)V}$. This shows that for some $Q_1\in(-\infty,Q^*)$ equality (\ref{eqn-steady-state-eq}) holds, and therefore that a steady state equilibrium with exclusion exists.
\end{proof}

\medskip

Unlike in the benchmark, here we cannot rule out existence of multiple equilibria. First, the left-hand side of (\ref{eqn-steady-state-eq}) need not be monotone in $Q_1$. To see this note that if $Q_1<Q'_1$ then $\phi^{Q_1}$ and $\phi^{Q'_1}$ cannot be pointwise ranked, i.e., $\phi^{Q_1}(q)>\phi^{Q'_1}(q)$ for $Q_1<q<Q'_1$ but $\phi^{Q_1}(q)<\phi^{Q'_1}(q)$ for $q>Q'_1$. Thus, \textbf{A4} can't be used to conclude that a higher $Q_1$ leads to a higher probability of winning, or vice versa. Second, the right-hand side of (\ref{eqn-steady-state-eq}) may be either increasing or decreasing in $Q_1$. Therefore, in general, (\ref{eqn-steady-state-eq}) may hold at multiple cutoffs $Q_1$. However, we do get uniqueness under mild conditions whenever $W$ is an REF (recall subsection \ref{subsec-noisy-review}).


\begin{proposition}\label{prop-unique-ban}
   Suppose that $\frac{V}{C}\ge \frac{1}{k(1-\delta)}$, that $W$ is an REF, and that the ratio $\frac{f(q+d)}{f(q)}$ is decreasing in $q$ for any given $d>0$. Then there is a unique steady state equilibrium with exclusion.
\end{proposition}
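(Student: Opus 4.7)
The plan is to show that in equation (\ref{eqn-steady-state-eq}) the left-hand side $W(Q_1, \phi^{Q_1})$ is strictly increasing in $Q_1$ on $(-\infty, Q^*)$ while the right-hand side is weakly decreasing, so that the two can cross at most once. Combined with the existence guaranteed by Theorem \ref{thm-steady-state}---whose hypothesis $V/C \ge (1-k)/(2k)$ is implied by $V/C \ge 1/(k(1-\delta))$---this yields uniqueness.

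For the right-hand side, I would set $a = F(Q_1)$ and write it as $N(a)/D(a)$ with $N(a) = (1+k)C + k\delta(2-a)V$ and $D(a) = (1+k)C + (1+k)(1+\delta - \delta a)V$, both linear in $a$. A short calculation gives
$$N'(a)D(a) - N(a)D'(a) = (1+k)\delta V\bigl[C - kV(1-\delta)\bigr],$$
which is non-positive precisely under $V/C \ge 1/(k(1-\delta))$. Hence the right-hand side is weakly decreasing in $a$, and therefore in $Q_1$.

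For the left-hand side I would use the REF representation $W(Q,\phi^Q) = 1 - G(s(Q) - Q)$, where $s(Q)$ is the cutoff signal from (\ref{eqn-signal-cutoff}) applied to $\phi^Q$. Substituting (\ref{eqn-submissions}) and changing variables $t = q - Q$, condition (\ref{eqn-signal-cutoff}) becomes
$$\EE[1 - G(\tau(Q) - X_Q)] = \rho(Q),$$
where $\tau(Q) := s(Q) - Q$, $\rho(Q) := k(2-F(Q))/[(1+k)(1-F(Q))]$, and $X_Q$ denotes the residual $q - Q$ conditional on $q \ge Q$, with density $f(Q + t)/(1 - F(Q))$ on $[0,\infty)$. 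A direct computation shows $\rho$ is strictly increasing in $Q$. The hypothesis of the proposition is log-concavity of $f$, which implies log-concavity of $1 - F$, i.e., an increasing hazard rate, and hence that $X_Q$ is stochastically decreasing in $Q$. If, toward contradiction, $Q_1 < Q_2$ and $\tau(Q_1) \le \tau(Q_2)$, then since $1-G(\tau - x)$ is decreasing in $\tau$ and increasing in $x$,
$$\rho(Q_2) = \EE[1-G(\tau(Q_2) - X_{Q_2})] \le \EE[1-G(\tau(Q_1) - X_{Q_1})] = \rho(Q_1),$$
contradicting $\rho(Q_2) > \rho(Q_1)$. Hence $\tau$ is strictly decreasing, and $W(Q,\phi^Q) = 1 - G(\tau(Q))$ is strictly increasing in $Q$.

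The main obstacle is this last step: one has to massage (\ref{eqn-signal-cutoff}) into a form where log-concavity of $f$ can be deployed via stochastic dominance of the residual $q - Q \mid q \ge Q$, and then combine this with strict monotonicity of the target success rate $\rho(Q)$ to pin down the sign of the change in $\tau(Q)$. Once both monotonicity claims are in hand, a strictly increasing continuous function can meet a weakly decreasing continuous function at most once on $(-\infty, Q^*)$, giving uniqueness.
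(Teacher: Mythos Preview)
Your proof is correct and follows the same high-level strategy as the paper: show that the right-hand side of (\ref{eqn-steady-state-eq}) is weakly decreasing in $Q_1$ and the left-hand side strictly increasing, then invoke existence from Theorem \ref{thm-steady-state}. Your treatment of the right-hand side is essentially identical to the paper's (the paper compares the endpoint values of the linear fractional function rather than computing the derivative, but the content is the same).

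Where you differ is in the argument that $W(Q,\phi^Q)$ is increasing. The paper works directly with the hypothesis that $f(q+d)/f(q)$ is decreasing: fixing $Q<Q'$ and $d=Q'-Q$, it assumes $\bar s(\phi^{Q'})-d\ge \bar s(\phi^Q)$, performs a change of variable, and then uses the monotone ratio to show that $\frac{f(q+d)}{2-F(Q')}-\frac{f(q)}{2-F(Q)}$ has a single sign change; replacing $1-G(\bar s-q)$ by a constant at the crossing point yields a contradiction. You instead recognize the hypothesis as log-concavity of $f$, pass to increasing hazard rate and hence to first-order stochastic dominance of the residuals $X_Q=q-Q\mid q\ge Q$, and combine this with strict monotonicity of the ``target'' success rate $\rho(Q)=\frac{k(2-F(Q))}{(1+k)(1-F(Q))}$ to force $\tau(Q)=s(Q)-Q$ to be strictly decreasing. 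Both arguments are valid; yours is more modular and brings out the probabilistic content (IHR $\Rightarrow$ decreasing residual life) cleanly, while the paper's is self-contained and does not appeal to the log-concavity $\Rightarrow$ IHR implication as a black box.
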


The proof shows that under the assumptions of the proposition the function $W(Q,\phi^Q)$ is strictly increasing and that the right-hand side of (\ref{eqn-steady-state-eq}) is decreasing, implying uniqueness. Note that the assumption that $\frac{f(q+d)}{f(q)}$ is decreasing holds for example when $f$ is a normal distribution.

\subsection{Comparison with benchmark}\label{subsec-comparison}

There are two forces that work in opposite directions and determine whether the introduction of an exclusion policy would encourage or discourage self-selection. On the one hand, as was already discussed above, the best response of a researcher to any given $\phi$ has a higher cutoff with exclusion than without due to the bigger loss in case of rejection. On the other hand, with exclusion the competition resulting from researchers choosing a given cutoff is less intense since a certain fraction of the population is ineligible. Indeed, when everyone uses cutoff $Q$, the quality distribution of submissions without exclusion is $f^Q$ while it is $\phi^Q = \frac{1+k}{2-F(Q)}f^Q\le f^Q$ with exclusion (the inequality is strict for $q\ge Q$). By \textbf{A4}, $W(Q,f^Q)\le W(Q,\phi^Q)$ for every $Q$, so the left-hand side of (\ref{eqn-benchmark}) is smaller than that of (\ref{eqn-steady-state-eq}). This may push researchers to submit lower quality ideas. The relative strength of each one of these forces determines whether $Q_0<Q_1$ or vice versa.


While in general the effect of exclusion is ambiguous, the comparison is clear in the interesting case where the benefit from a successful submission $V$ is large. This is formalized in the next result and illustrated in the left panel of Figure \ref{figure-Ban-NoBan}.

\begin{figure}
\centering
\includegraphics[width=.33\textwidth]{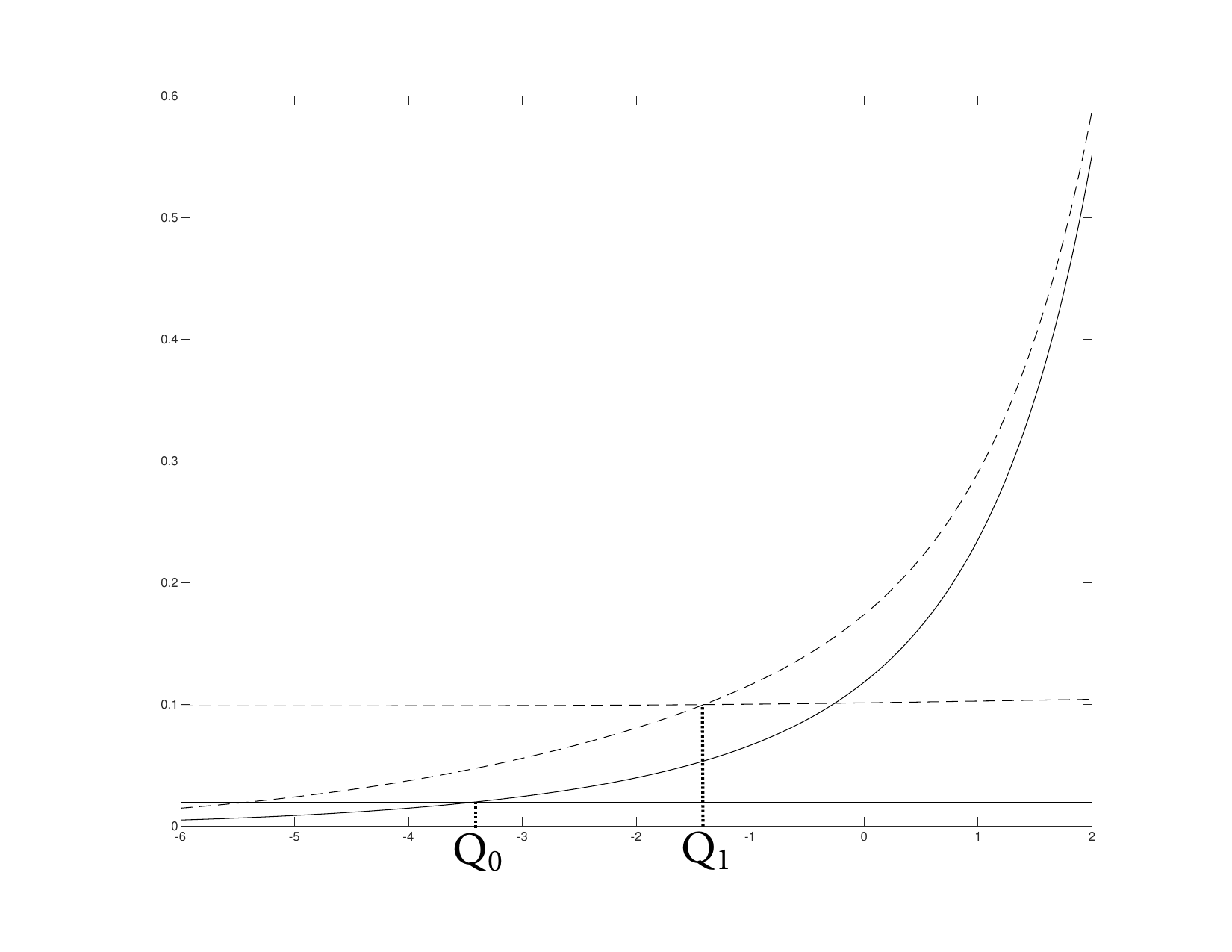}\hfill
\includegraphics[width=.33\textwidth]{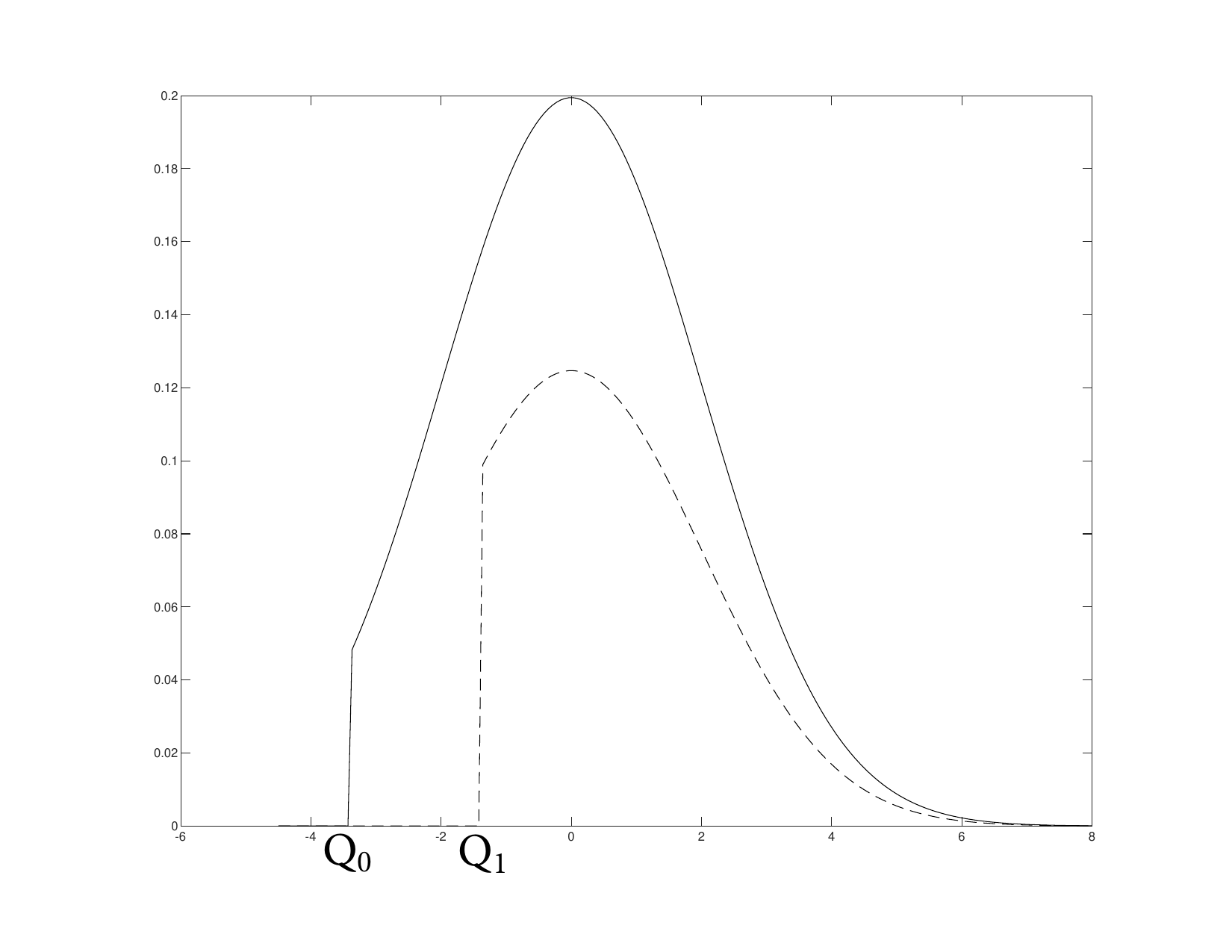}\hfill
\includegraphics[width=.33\textwidth]{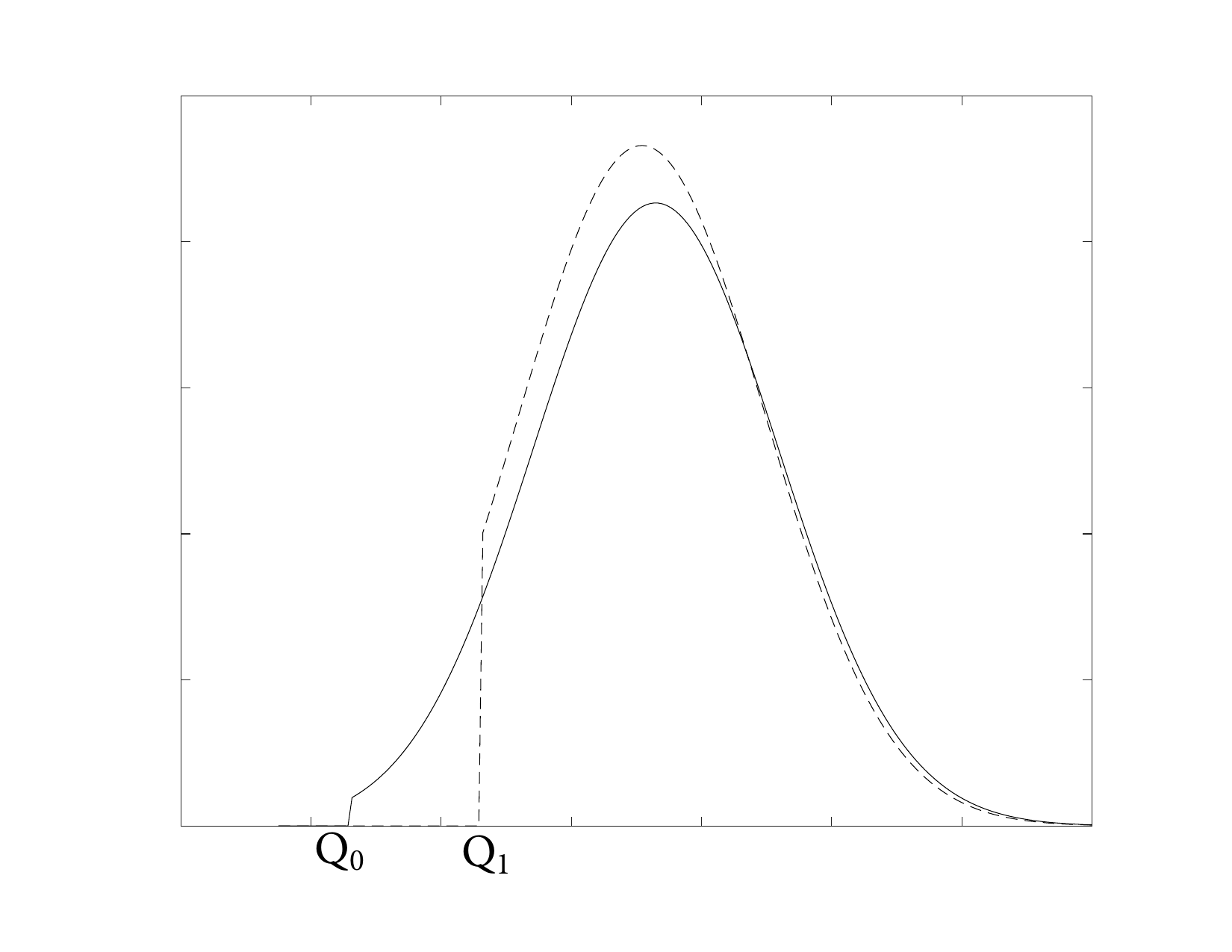}
\caption{Equilibria in the benchmark case (solid) and with exclusion (dashed) in the normal-normal model with parameters $\mu_q=0$, $\sigma^2_q=2$, $\sigma^2_s=5$, $C=1$, $V=50$, $\delta=0.97$, and $k=0.1$. The left panel graphs the two sides of equations (\ref{eqn-benchmark}) (solid) and (\ref{eqn-steady-state-eq}) (dashed), so that the intersection points give the equilibria cutoffs $Q_0$ and $Q_1$. The middle panel shows the induced equilibrium quality distribution of submissions, and the right panel shows the equilibrium quality distribution of funded ideas.}
\label{figure-Ban-NoBan}
\end{figure}

\begin{corollary}\label{coro-comparison-large-V}
    Fix $C,k,\delta,F$ and $W$. For all $V$ large enough the benchmark equilibrium cutoff $Q_0$ is smaller than any steady state equilibrium cutoff with exclusion $Q_1$. 
\end{corollary}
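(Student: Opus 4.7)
The plan is to prove two claims as $V\to\infty$: (a) the benchmark cutoff $Q_0$ tends to $-\infty$, and (b) any sequence of exclusion equilibrium cutoffs $Q_1(V)$ stays bounded below. Once both are in hand, for $V$ sufficiently large $Q_0$ lies below the uniform lower bound on $Q_1$, giving the corollary.

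For (a), the characterizing equation (\ref{eqn-benchmark}) has its right-hand side $C/(C+V)\to 0$, and Proposition~\ref{prop-benchmark} already gives $Q_0<Q^*$, so any limit point lies in $[-\infty,Q^*]$. If a finite limit point $\bar Q$ existed along a subsequence, joint continuity (A1) would give $W(\bar Q,f^{\bar Q})=0$; but A3 (when $\bar Q<Q^*$, so $v(f^{\bar Q})>k$) or A2 (when $\bar Q=Q^*$) makes $W(\bar Q,f^{\bar Q})>0$. Hence $Q_0\to-\infty$.

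For (b), argue by contradiction: suppose equilibrium cutoffs $Q_1^n$ for some $V_n\to\infty$ satisfy $Q_1^n\to-\infty$. Then $F(Q_1^n)\to 0$, and by (\ref{eqn-submissions}) the submission densities $\phi^{Q_1^n}$ converge pointwise to $\phi^*:=\tfrac{1+k}{2}f$, hence in the metric $d$ by dominated convergence. Dividing the numerator and denominator of the RHS of (\ref{eqn-steady-state-eq}) by $V_n$ and using $F(Q_1^n)\to 0$ shows the RHS tends to $\frac{2k\delta}{(1+k)(1+\delta)}>0$. A contradiction will follow once I show the LHS $W(Q_1^n,\phi^{Q_1^n})\to 0$. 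Fix $\epsilon>0$: since $v(\phi^*)=\tfrac{1+k}{2}>k$, A3 yields $q_0$ with $W(q_0,\phi^*)<\epsilon/2$. For large $n$, also $v(\phi^{Q_1^n})>k$ and $Q_1^n<q_0$, so the monotonicity in A3 gives $W(Q_1^n,\phi^{Q_1^n})\le W(q_0,\phi^{Q_1^n})$, and A1 yields $W(q_0,\phi^{Q_1^n})\to W(q_0,\phi^*)<\epsilon/2$. Hence $W(Q_1^n,\phi^{Q_1^n})<\epsilon$ for $n$ large, the desired limit.

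The only delicate step is this double-limit passage in the LHS, since both the argument $Q_1^n$ and the competition $\phi^{Q_1^n}$ move simultaneously. The two-step trick—use strict monotonicity in $q$ (A3) to replace $Q_1^n$ by a fixed reference $q_0$, then use joint continuity (A1) to pass $\phi^{Q_1^n}\to\phi^*$—handles this cleanly. Everything else is routine substitution into (\ref{eqn-benchmark}) and (\ref{eqn-steady-state-eq}).
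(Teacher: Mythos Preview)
Your proof is correct and follows essentially the same approach as the paper's: both show $Q_0\to-\infty$ via (\ref{eqn-benchmark}) and that exclusion cutoffs stay bounded below by comparing the two sides of (\ref{eqn-steady-state-eq}) as $V\to\infty$. You package the second step as a sequential contradiction whereas the paper directly fixes a threshold $\underline{Q}$ using the uniform lower bound $\tfrac{k\delta}{1+k\delta}$ on the RHS limit, but the ingredients---in particular your two-step trick for the LHS, which is exactly the argument in the proofs of Proposition~\ref{prop-benchmark} and Theorem~\ref{thm-steady-state}---are the same.
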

    
\begin{proof}
    Recall from Proposition \ref{prop-benchmark} that the equilibrium cutoff without exclusion is characterized by the equation $W(Q_0,f^{Q_0})=\frac{C}{C+V}$. As $V$ grows to infinity the right-hand side of this equation goes to zero, implying that $\lim_{V\uparrow +\infty} Q_0=-\infty$. 
    
    With exclusion, a steady state equilibrium $Q_1$ is characterized by (\ref{eqn-steady-state-eq}). For any $Q\in (-\infty,Q^*)$ we have 
    $$\lim_{V\uparrow +\infty} \frac{(1+k)C+k\delta (2-F(Q))V}{(1+k)C+(1+k)(1+\delta -\delta F(Q))V} = \frac{k\delta(2-F(Q))}{(1+k)(1+\delta-\delta F(Q))}\ge \frac{k\delta}{1+k\delta}.$$
    On the other hand, we know that $\lim_{Q\downarrow -\infty} W(Q,\phi^{Q})=0$. Let $\underline{Q}$ be such that $W(Q,\phi^{Q})< \frac{1}{2} \cdot \frac{k\delta}{1+k\delta}$ for all $Q<\underline{Q}$. Then it follows from Theorem \ref{thm-steady-state} that for all $V$ large enough any steady state cutoff $Q_1$ must be above $\underline{Q}$. Thus, $Q_1>Q_0$ for all large enough $V$. 
\end{proof}

Note that the equilibrium volume of submissions without exclusion is given by $1-F(Q_0)$, while with exclusion it is $\frac{1+k}{2-F(Q_1)}(1-F(Q_1))$. Since we have just shown that $Q_0<Q_1$ for large $V$, it follows that the volume of submission is lower with exclusion. Indeed, the volume decreases both because low quality applications between the two cutoffs are no longer submitted, and because there are fewer eligible researchers with high quality ideas above $Q_1$. See the middle panel of Figure \ref{figure-Ban-NoBan} for an illustration. Recall that the lower volume also implies that researchers' welfare is higher with exclusion.

The right panel of Figure \ref{figure-Ban-NoBan} compares the resulting quality distributions of winners. Since $Q_0<Q_1$ exclusion eliminates funding for qualities $q\in (Q_0,Q_1)$, simply because these are no longer being submitted. However, exclusion reduces the share of the budget that is allocated to top-quality applications since the equilibrium winning probabilities $W(q,f^{Q_0})$ and $W(q,\phi^{Q_1})$ both converge to one as $q$ increases to infinity (by \textbf{A3}), and there are more top-quality submissions without exclusion. Overall, for large $V$ exclusion tends to shift budget from the two extremes to the middle range of qualities; Proposition \ref{prop-noisy-review-comparison} below shows that this holds for a broad class of exclusion policies. Depending on the model parameters, this shift often results in an increase of the mean quality (as in Figure \ref{figure-Ban-NoBan}), and is beneficial whenever the principal is mostly concerned about the possibility of wasting resources on low quality ideas. On the other hand, if the principal tries to maximize the mean of a sufficiently convex objective, then our model suggests that exclusion is counterproductive.

\section{General exclusion policies}\label{sec-generalized}


\subsection{Other exclusion criteria}\label{subsec-other-criteria}

In the policy considered thus far exclusion was triggered by rejection of a researcher's proposal. When the review process of applications generates a signal of their quality, as is the case with REFs defined in subsection \ref{subsec-noisy-review}, the designer can use different cutoffs for current period funding than for next period eligibility. We consider here the effects that such generalized exclusion criteria have on the steady state equilibria of the induced game.

More formally, suppose as in subsection \ref{subsec-noisy-review} that a proposal of quality $q$ generates a signal $s$ whose (conditional) cdf is given by $G(s|q)=G(s-q)$, where $G$ admits a continuous and everywhere positive density $g$. The cutoff signal for funding given submission distribution $\phi$, denoted $\bar s(\phi)$, is determined by the market clearing condition (\ref{eqn-signal-cutoff}). Fix some $\bar s\in \RR \cup \{\pm \infty\}$ and consider the policy according to which a researcher is excluded in the next period if and only if their proposal in the current period generated a signal below $\bar s$. The special case $\bar s = -\infty$ is the benchmark of Section \ref{sec-benchmark}, while the other extreme $\bar s=+\infty$ means that a researcher who applied today cannot participate tomorrow, regardless of the outcome.\footnote{Note that we assume here that the principal commits to an exclusion cutoff that is independent of the submitted quality distribution $\phi$, and hence that the volume of excluded researchers would change depending on $\phi$ (unlike the funding cutoff which adjusts to induce mass $k$ of winners). However, if $Q_1$ is an equilibrium cutoff with exclusion as in Section \ref{sec-steady-states-conditional}, then setting $\bar s = \bar s(\phi^{Q_1})$ means that $Q_1$ is still an equilibrium and the resulting outcome is the same as in that section; other equilibria may be affected by this modification.} 

It will be convenient to denote by $Ban(Q,\bar s) = \int_Q^{\infty} f(q)G(\bar s-q)dq$ the ex-ante probability that an eligible researcher will be excluded in the next period given that they use entry cutoff $Q$ and given exclusion cutoff $\bar s$. Using this notation, and similar to the derivation of (\ref{eqn-steady-state-payoff2}), the total expected payoff of a researcher that uses cutoff $Q$ and faces the same competition $\phi$ every period (and given $\bar s$) is then
$$x(Q,\phi,\bar s) := \frac{Win(Q,\phi)V-(1-F(Q)-Win(Q,\phi))C}{(1-\delta)(1+\delta Ban(Q,\bar s))}.$$

Now, if $\alpha$ is the share of eligible researchers in a given period, and if all researchers use entry cutoff $Q$, then the volume of rejections in this period is given by $\alpha Ban(Q,\bar s)$. Thus, the share of eligible researchers next period is $1-\alpha Ban(Q,\bar s)$. A steady state share of eligibility is therefore given by $\alpha = \frac{1}{1+Ban(Q,\bar s)}$. This motivates the following definition.

\begin{definition}\label{def-eq-noisy-review}
Fix an exclusion cutoff $\bar s$. A cutoff $Q\in \RR \cup \{\pm \infty\}$ is a steady state equilibrium given $\bar s$ if $Q$ is a best response to $\phi=\frac{1}{1+Ban(Q,\bar s)}f^Q$, i.e.\ if 
$$Q\in \argmax_{Q'} x\left(Q',\frac{1}{1+Ban(Q,\bar s)}f^Q,\bar s\right).$$
\end{definition}

The next result generalizes Corollary \ref{coro-comparison-large-V}, showing that any non-trivial exclusion cutoff induces more self-selection than in the benchmark without exclusion whenever the benefit of winning $V$ is sufficiently large. 

\begin{proposition}\label{prop-noisy-review-large-V}
Fix $C,k,\delta,F,W$ and an exclusion cutoff $\bar s\in (-\infty,+\infty]$ such that $k<\frac{1}{1+Ban(-\infty,\bar s)}$.\footnote{Note that this condition holds whenever $k<0.5$. If $k\ge\frac{1}{1+Ban(-\infty,\bar s)}$ then the unique steady state equilibrium cutoff given $\bar s$ is $Q=-\infty$, i.e., every eligible researcher applies, and every application wins.} For all $V$ large enough the benchmark equilibrium cutoff $Q_0$ is smaller than any steady state equilibrium cutoff $Q$ given $\bar s$. 
\end{proposition}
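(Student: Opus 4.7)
The argument parallels the proof of Corollary \ref{coro-comparison-large-V}. The benchmark cutoff satisfies $W(Q_0,f^{Q_0}) = C/(C+V)$, so $Q_0 \to -\infty$ as $V\to\infty$; the goal is to show that any steady state cutoff $Q$ under the exclusion policy stays bounded below by a constant depending only on $C,k,\delta,F,W$ and $\bar s$.

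The first step is to derive the analog of Lemma \ref{lemma-best-response}(ii) in this setting. Write $\phi_Q := \frac{1}{1+Ban(Q,\bar s)}\,f^Q$. A one-shot deviation argument at the marginal quality $Q$ balances the flow payoff of submitting against the fact that, with probability $G(\bar s-Q)$, a submitting agent is idle next period rather than earning continuation $\delta x$. This yields that an interior $Q$ is a best response to a stationary $\phi$ with $v(\phi)>k$ if and only if $x(Q,\phi,\bar s)>0$ and
\begin{equation*}
W(Q,\phi) \;=\; \frac{C+\delta(1-\delta)\,G(\bar s-Q)\,x(Q,\phi,\bar s)}{V+C}.
\end{equation*}
In steady state, \textbf{A2} together with the definition of $\phi_Q$ gives $Win(Q,\phi_Q)=k(1+Ban(Q,\bar s))$, reducing this to a relation in $Q$ and $V$ alone. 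The boundary cases $Q=\pm\infty$ are ruled out exactly as in Theorem \ref{thm-steady-state}: $Q=+\infty$ produces no competition and so the best response would be $-\infty$, whereas $Q=-\infty$ produces volume $\tfrac{1}{1+Ban(-\infty,\bar s)}>k$ by hypothesis, making a strictly positive cutoff strictly better.

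The second step is to take $V\to\infty$. Dividing the expression for $x(Q,\phi_Q,\bar s)$ by $V$ and using that $Ban$, $F$, and $G$ are uniformly bounded, the right-hand side of the optimality equation converges uniformly in $Q$ to
\begin{equation*}
\eta(Q) \;:=\; \frac{\delta k\,G(\bar s-Q)\,(1+Ban(Q,\bar s))}{1+\delta Ban(Q,\bar s)}.
\end{equation*}
As $Q\to-\infty$, $\eta(Q)\to L := \tfrac{\delta k(1+Ban(-\infty,\bar s))}{1+\delta Ban(-\infty,\bar s)} > 0$. Meanwhile $\phi_Q\to \phi_{-\infty}=\tfrac{1}{1+Ban(-\infty,\bar s)}f$ in the $d$-metric, and the hypothesis $k<1/(1+Ban(-\infty,\bar s))$ gives $v(\phi_{-\infty})>k$; then \textbf{A1} and \textbf{A3} yield $W(Q,\phi_Q)\to 0$ as $Q\to-\infty$.

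Finally, choose $\underline Q$ with $W(Q,\phi_Q)<L/4$ and $\eta(Q)\ge 3L/4$ for all $Q\le\underline Q$. Uniform convergence of the right-hand side to $\eta$ produces $V^*$ such that for all $V>V^*$ the right-hand side exceeds $L/2$ throughout $(-\infty,\underline Q]$, so the optimality equation fails there and no interior equilibrium cutoff can lie below $\underline Q$. Since $Q_0\to-\infty$, we obtain $Q_0<\underline Q<Q$ for all $V$ large enough. The most delicate step is the first one --- correctly tracking the signal-dependent continuation values in the one-shot deviation identity; once that identity is in hand, the limit argument is structurally identical to the proof of Corollary \ref{coro-comparison-large-V}.
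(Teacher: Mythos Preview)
Your proof is correct and follows essentially the same approach as the paper: derive the one-shot deviation identity $W(Q,\phi)=\frac{C+\delta(1-\delta)G(\bar s-Q)x(Q,\phi,\bar s)}{C+V}$, plug in the steady-state relation $Win(Q,\phi_Q)=k(1+Ban(Q,\bar s))$, and then exploit that as $V\to\infty$ the right-hand side stays bounded away from zero near $Q=-\infty$ while the left-hand side tends to zero. The only cosmetic difference is that you compute the exact limit $L$ of $\eta(Q)$ and invoke uniform convergence, whereas the paper uses the simpler pointwise lower bound $\eta(Q)\ge \delta k\,G(\bar s-Q)$; both routes deliver the same $\underline Q$.
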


Our last result in this section compares the quality distribution of winners between the benchmark equilibrium and some equilibrium with exclusion.  

\begin{proposition}\label{prop-noisy-review-comparison}
Suppose that the signal distribution $G$ has increasing hazard rate. Fix an exclusion cutoff $\bar s\in (-\infty,+\infty]$ and suppose that $Q$ is an equilibrium cutoff given $\bar s$. Denote by $h$ the resulting equilibrium quality distribution of winners and by $h_0$ the quality distribution of winners in the benchmark equilibrium.\\
(i) If $Q>Q_0$ then there is $\bar Q>Q$ such that $h(q)\ge h_0(q)$ for $q\in[Q,\bar Q]$ and $h(q)\le h_0(q)$ otherwise.\\
(ii) If $Q\le Q_0$ then $h_0$ first-order stochastically dominates $h$.
\end{proposition}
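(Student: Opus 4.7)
The approach is to reduce both parts to a single-crossing property between the densities $h$ and $h_0$, via monotonicity of a likelihood ratio. Let $\bar s_0 := \bar s(f^{Q_0})$ and $\bar s_1 := \bar s(\phi)$ denote the market-clearing signal cutoffs, where $\phi = \rho f^Q$ and $\rho := \frac{1}{1+Ban(Q,\bar s)} < 1$. A direct differentiation gives
$$\frac{d}{dq}\log\frac{W(q,\phi)}{W(q,f^{Q_0})} = \frac{g(\bar s_1 - q)}{1-G(\bar s_1 - q)} - \frac{g(\bar s_0 - q)}{1-G(\bar s_0 - q)}.$$
Because $G$ has increasing hazard rate, the ratio $R(q) := W(q,\phi)/W(q,f^{Q_0})$ is decreasing in $q$ when $\bar s_1 \le \bar s_0$ and increasing in $q$ when $\bar s_1 \ge \bar s_0$. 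Since $W(q,\phi), W(q,f^{Q_0}) \to 1$ as $q \to \infty$, we have $R(q) \to 1$, so in fact $R \ge 1$ everywhere in the first case and $R \le 1$ everywhere in the second.

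For Part (i), $Q > Q_0$ implies $\phi \le f^{Q_0}$ pointwise, hence by \textbf{A4} we have $W(q,\phi) \ge W(q,f^{Q_0})$ and $\bar s_1 \le \bar s_0$. On $[Q,\infty)$ the ratio $h(q)/h_0(q) = \rho R(q)$ is therefore decreasing in $q$ with limit $\rho < 1$. Because $h \equiv 0$ on $(-\infty, Q)$ while $h_0 > 0$ on $[Q_0, Q)$, and both densities integrate to $k$, the mass of $h$ on $[Q,\infty)$ strictly exceeds that of $h_0$. This forces $\rho R(Q) > 1$, and by monotonicity and continuity there is a unique $\bar Q \in (Q,\infty)$ at which $\rho R(\bar Q) = 1$, yielding $h \ge h_0$ on $[Q,\bar Q]$ and $h \le h_0$ on $[\bar Q,\infty)$. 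Since $h = 0 \le h_0$ on $(-\infty, Q)$, the claim follows.

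For Part (ii), the pointwise comparison between $\phi$ and $f^{Q_0}$ fails ($\phi > 0 = f^{Q_0}$ on $[Q,Q_0)$, but $\phi \le f^{Q_0}$ on $[Q_0,\infty)$), so \textbf{A4} is unavailable and I split on the sign of $\bar s_1 - \bar s_0$. If $\bar s_1 \ge \bar s_0$ then $R \le 1$, so $h/h_0 \le \rho < 1$ on $[Q_0,\infty)$, which combined with $h > h_0 = 0$ on $[Q,Q_0)$ gives single crossing at $Q_0$. If $\bar s_1 < \bar s_0$ then $R$ is decreasing with $R \ge 1$ on $[Q_0,\infty)$, so $h/h_0 = \rho R$ is decreasing to $\rho < 1$; either this ratio is already $\le 1$ at $Q_0$ (crossing at $Q_0$) or it crosses $1$ at a unique $\bar Q > Q_0$. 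In either sub-case there is a point $\tilde Q \ge Q_0$ with $h \ge h_0$ on $(-\infty,\tilde Q]$ and $h \le h_0$ on $[\tilde Q,\infty)$. Combined with $\int h = \int h_0 = k$, this single-crossing property is equivalent to $h_0$ first-order stochastically dominating $h$.

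The main obstacle is the case analysis in Part (ii), which is forced by the failure of pointwise dominance between $\phi$ and $f^{Q_0}$; the argument there relies on tracking the market-clearing cutoffs $\bar s_0, \bar s_1$ via the monotone likelihood ratio, together with the limiting behavior $R(q) \to 1$ as $q \to \infty$.
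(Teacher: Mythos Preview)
Your proof is correct and follows essentially the same route as the paper: both compare $h$ and $h_0$ via the ratio $(1-G(\bar s_0-q))/(1-G(\bar s_1-q))$, use the increasing hazard rate of $G$ to get monotonicity, split Part~(ii) on the sign of $\bar s_1-\bar s_0$, and reduce FOSD to single crossing. Your framing through the log-derivative of $R(q)$ is just a repackaging of the paper's inequality \eqref{eqn-condtional-unconditional-normal}. One point where you are actually more careful than the paper: in Part~(i) you supply the mass-comparison argument ($\int_Q^\infty h > \int_Q^\infty h_0$) to force $\rho R(Q)>1$ and hence $\bar Q>Q$ strictly, whereas the paper simply asserts $\bar Q>Q$ after establishing monotonicity and the limit.
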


When the designer sets an exclusion cutoff $\bar s$ their goal is to encourage more self-selection. Part (i) of Proposition \ref{prop-noisy-review-comparison} concerns the case where this attempt is successful, raising the entry cutoff from $Q_0$ to $Q$. In this case, exclusion results in elimination of funding for qualities between $Q_0$ and $Q$ (since these are no longer being submitted), but at the same time some reduction in funding of top-quality applications (above $\bar Q$); thus, funding is shifted from the two extremes towards the middle of the range of qualities. See the right panel of Figure \ref{figure-Ban-NoBan} for an illustration. On the other hand, exclusion may result in a lower entry cutoff $Q\le Q_0$ due to the reduced competition that an eligible researcher faces. Part (ii) of the proposition argues that in this case the quality distributions of winners can be unambiguously ranked, with exclusion leading to a worse outcome. Note that, by Proposition \ref{prop-noisy-review-large-V}, whenever $V$ is large enough we have that $Q>Q_0$ and hence the relevant case is (i).


\subsection{Multi-period exclusion}\label{subsec-multi-period}

Since the goal of exclusion is to encourage self-selection, one may wonder whether making rejection more costly by increasing the length of ineligibility provides the right incentives. We study policies according to which rejection implies a ban of $t\ge 1$ periods. The analysis repeats the same steps as in the $t=1$ case analyzed above so we skip some of the details.

The total expected payoff of an eligible researcher that uses cutoff $Q$ every period (when eligible), given that the quality distribution per-period is $\phi$, is given by
$$x_t(Q,\phi) =\frac{Win(Q,\phi)V-(1-F(Q)-Win(Q,\phi))C}{(1-\delta)[1 + \delta(1-F(Q)-Win(Q,\phi))(1+\delta+\ldots+\delta^{t-1})]}.$$
Notice that $t$ only shows up once in the denominator and that $x_t(Q,\phi)$ is decreasing in $t$. Just as in Lemma \ref{lemma-best-response}, if $v(\phi)\le k$ then $Q=-\infty$ (always submit) is the unique best response, and if $v(\phi)>k$ then the unique best response is characterized by $x_t(Q,\phi)> 0$ and
    \begin{equation}\label{eqn-best-response-T}
        W(Q,\phi)= \frac{C+\delta(1-\delta^t)x_t(Q,\phi)}{C+\delta(1-\delta^t)x_t(Q,\phi)+V}.
    \end{equation}
It is not hard to see that the product $(1-\delta^t)x_t(Q,\phi)$ increases in $t$, from which we get the following.

\begin{lemma}\label{lemma-increasing-T}
    If $v(\phi)> k$ then the best-response cutoff to $\phi$ increases in the length of exclusion $t$.
\end{lemma}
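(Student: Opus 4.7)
The plan combines two ingredients: (a) the one-shot deviation characterization (\ref{eqn-best-response-T}) of the best response, and (b) the monotonicity in $t$ of $\psi_t(Q):=\delta(1-\delta^t)x_t(Q,\phi)$ at fixed $Q$. Ingredient (b) — the one the excerpt highlights as ``not hard to see'' — follows from the explicit formula
$$\psi_t(Q)=\frac{\delta(1-\delta^t)\,N(Q)}{(1-\delta)+\delta R(Q)(1-\delta^t)},$$
where $N(Q)=Win(Q,\phi)V-(1-F(Q)-Win(Q,\phi))C$ and $R(Q)=1-F(Q)-Win(Q,\phi)$. Viewing the right-hand side as a function of $\tau=1-\delta^t$, a short derivative computation gives a positive constant-sign factor of $\delta N(Q)(1-\delta)$ in the numerator, so that $\psi_t(Q)$ is strictly increasing in $t$ whenever $N(Q)>0$, i.e.\ whenever $x_t(Q,\phi)>0$.

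With (b) in hand, I would argue by contradiction. Let $t<t'$ and let $Q_t,Q_{t'}$ be the (unique, by the $t$-analogue of Lemma \ref{lemma-best-response}) best-response cutoffs to $\phi$. Suppose $Q_{t'}<Q_t$. The indifference at $Q_t$ for horizon $t$ reads
$$W(Q_t,\phi)\,V=(1-W(Q_t,\phi))\,(C+\psi_t(Q_t)).$$
By (b), $\psi_{t'}(Q_t)>\psi_t(Q_t)$. By the optimality of $Q_{t'}$ in the horizon-$t'$ problem, $x_{t'}(Q_{t'},\phi)\ge x_{t'}(Q_t,\phi)$, so $\psi_{t'}(Q_{t'})\ge \psi_{t'}(Q_t)>\psi_t(Q_t)$. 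Substituting into the displayed equation yields
$$W(Q_t,\phi)\,V<(1-W(Q_t,\phi))\,(C+\psi_{t'}(Q_{t'})).$$

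This last inequality is precisely the statement that, in the horizon-$t'$ game with stationary continuation cutoff $Q_{t'}$, the one-shot payoff from submitting at quality $q=Q_t$ is strictly less than the payoff from not submitting (the algebra is identical to the derivation of (\ref{eqn-best-response-T})). But $Q_t>Q_{t'}$ by assumption, and under the cutoff strategy $Q_{t'}$ every quality strictly above $Q_{t'}$ submits in equilibrium — a contradiction. Hence $Q_{t'}\ge Q_t$, which is the claim.

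The main obstacle is ingredient (b): handling the fact that $x_t(Q,\phi)$ itself decreases in $t$, while the effective cost of rejection $\psi_t(Q)$ must be shown to move in the opposite direction. The second subtlety is that in the contradiction step one must pass from the ``natural'' quantity $\psi_{t'}(Q_t)$ (what appears in the one-shot deviation from $Q_t$) to the ``relevant'' quantity $\psi_{t'}(Q_{t'})$ (the true horizon-$t'$ continuation value); the revealed-preference inequality $x_{t'}(Q_{t'},\phi)\ge x_{t'}(Q_t,\phi)$ from the optimality of $Q_{t'}$ bridges this gap in the direction needed.
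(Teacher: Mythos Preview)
Your proposal is correct and follows essentially the same approach as the paper: both arguments rest on (i) the monotonicity of $(1-\delta^t)x_t(Q,\phi)$ in $t$ whenever the numerator $N(Q)$ is positive, (ii) the revealed-preference inequality from optimality of the best-response cutoff, and (iii) the characterization (\ref{eqn-best-response-T}) together with strict monotonicity of $W(\cdot,\phi)$. The only cosmetic difference is that the paper runs the inequality chain directly to obtain $W(Q_t,\phi)>W(Q_{t'},\phi)$ (for $t>t'$), whereas you wrap the same chain in a contradiction; your final step would be cleanest if you simply observed that the displayed inequality is equivalent to $W(Q_t,\phi)<W(Q_{t'},\phi)$, immediately contradicting $Q_t>Q_{t'}$ by \textbf{A3}.
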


Thus, for fixed competition $\phi$, the longer the ban the more certain of wining a researcher needs to be in order to submit. Of course, changing $t$ would also change the steady state competition that a researcher faces. Namely, suppose that all researchers use cutoff $Q$, and that $\alpha$ is the share of eligible researchers in some period. The volume of submissions is then $\alpha(1-F(Q))$. In the relevant case where this volume is greater than $k$ there are $\alpha(1-F(Q))-k$ rejections, and these researchers will be ineligible for $t$ periods. If this happens every period then the per-period fraction of eligible researchers is $1-t[\alpha(1-F(Q))-k]$. For a steady state we thus need $\alpha = 1-t[\alpha(1-F(Q))-k]$, or $\alpha=\frac{1+tk}{1+t(1-F(Q))}$. The steady state quality distribution of submissions induced by the cutoff $Q$ is therefore  
\begin{eqnarray}\label{eqn-submissions-T}
\phi_t^Q:=
\left\{ \begin{array}{ll}
\frac{1+tk}{1+t(1-F(Q))}f^Q & \textit{ if } ~~ Q <Q^*,\\
f^Q & \textit{ if } ~~Q \ge Q^*.\\
\end{array} \right.
\end{eqnarray}

\begin{definition}\label{def-eq-with-bans-T}
A cutoff $Q_t\in \RR \cup \{\pm \infty\}$ is a steady state equilibrium with $t$-period exclusion if $Q_t$ is the best response to $\phi_t^{Q_t}$ given in (\ref{eqn-submissions-T}).
\end{definition}

The following generalizes Theorem \ref{thm-steady-state}. The proof is similar and therefore omitted.

\begin{theorem}\label{thm-steady-state-T}
    Suppose that $\frac{V}{C}\ge \frac{1-k}{(t+1)k}$. Then there exists a steady state equilibrium with $t$-period exclusion. If $Q_t$ is an equilibrium cutoff then $-\infty<Q_t<Q^*$. Moreover, $Q_t$ is an equilibrium cutoff if and only if
    \begin{equation}\label{eqn-steady-state-eq-T}
    W(Q_t,\phi^{Q_t}_t)= \frac{(1+tk)C+k\delta(1+\delta+\ldots+\delta^{t-1}) (1+t(1-F(Q_t)))V}{(1+tk)C+(1+tk)[1+\delta(1+\delta+\ldots+\delta^{t-1})(1-F(Q_t))]V}.
    \end{equation}     
\end{theorem}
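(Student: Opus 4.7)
The plan is to mirror the proof of Theorem~\ref{thm-steady-state}, adapting each step to incorporate the $t$-period horizon. I begin by ruling out the extreme cases. If $Q_t\ge Q^*$ then $\phi_t^{Q_t}=f^{Q_t}$ has volume $1-F(Q_t)\le k$, so by the generalized one-shot-deviation analysis the best response is $Q=-\infty$, contradicting $Q_t\ge Q^*$. If $Q_t=-\infty$ then $\phi_t^{-\infty}=\tfrac{1+tk}{1+t}f$, which has volume $\tfrac{1+tk}{1+t}>k$ (because $k<1$), so the best-response condition in (\ref{eqn-best-response-T}) forces a finite cutoff, a contradiction. Hence any equilibrium lies in $(-\infty,Q^*)$.

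Next I would translate the best-response conditions into the displayed equation. Fix $Q\in(-\infty,Q^*)$ and set $\phi=\phi_t^Q$; since $v(\phi)>k$, assumption \textbf{A2} gives
$$Win(Q,\phi_t^Q)=\int_Q^\infty f(q)W(q,\phi_t^Q)\,dq=\frac{1+t(1-F(Q))}{1+tk}\int\phi_t^Q(q)W(q,\phi_t^Q)\,dq=\frac{k(1+t(1-F(Q)))}{1+tk},$$
and correspondingly $1-F(Q)-Win(Q,\phi_t^Q)=\frac{1-F(Q)-k}{1+tk}$. Plugging into the formula for $x_t(Q,\phi_t^Q)$, the condition $x_t(Q,\phi_t^Q)>0$ becomes $\tfrac{V}{C}>\tfrac{1-F(Q)-k}{k(1+t(1-F(Q)))}$; the right-hand side is maximized at $Q=-\infty$, where it equals $\tfrac{1-k}{(t+1)k}$, so the hypothesis $\tfrac{V}{C}\ge\tfrac{1-k}{(t+1)k}$ guarantees this positivity condition holds uniformly in $Q$.

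Writing $s_t:=1+\delta+\cdots+\delta^{t-1}=\tfrac{1-\delta^t}{1-\delta}$, I would next compute $\delta(1-\delta^t)x_t(Q,\phi_t^Q)$ using the above simplifications. The identity $\delta(1-\delta^t)=(1-\delta)\delta s_t$ lets the $(1-\delta)$ cancel; telescoping terms of the form $C\delta s_t(1-F(Q)-k)$ in the numerator of $C+\delta(1-\delta^t)x_t$ yields
$$C+\delta(1-\delta^t)x_t(Q,\phi_t^Q)=\frac{(1+tk)C+k\delta s_t(1+t(1-F(Q)))V}{(1+tk)+\delta s_t(1-F(Q)-k)}.$$
Adding $V$ and collecting terms in the numerator factors cleanly as $(1+tk)C+V(1+tk)[1+\delta s_t(1-F(Q))]$. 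Dividing, the numerator-denominator factor $(1+tk)+\delta s_t(1-F(Q)-k)$ cancels and the right-hand side of the second condition in (\ref{eqn-best-response-T}) becomes exactly the right-hand side of (\ref{eqn-steady-state-eq-T}).

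Finally, I would establish existence through the intermediate value theorem, exactly as in Theorem~\ref{thm-steady-state}. The map $Q\mapsto\phi_t^Q$ is continuous in the $d$-metric on $(-\infty,Q^*)$, so by \textbf{A1} the left-hand side $W(Q,\phi_t^Q)$ is continuous; the arguments from Proposition~\ref{prop-benchmark} show it tends to $0$ as $Q\downarrow-\infty$ and to $1$ as $Q\uparrow Q^*$. The right-hand side of (\ref{eqn-steady-state-eq-T}) is clearly continuous in $Q$ and, by inspection of its limiting values when $V\to\infty$ dominates or when $V$ is finite, remains bounded strictly between $0$ and $1$ on $(-\infty,Q^*)$. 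Hence some $Q_t\in(-\infty,Q^*)$ satisfies the equation. The main obstacle is purely algebraic: ensuring the cancellation in the numerator collecting $V$-terms is handled correctly so that the factor $(1+tk)$ appears exactly as in the stated denominator. Everything else is a direct lift of the $t=1$ argument.
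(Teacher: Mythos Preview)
Your proposal is correct and follows exactly the approach the paper intends: the paper explicitly says the proof of Theorem~\ref{thm-steady-state-T} ``is similar [to that of Theorem~\ref{thm-steady-state}] and therefore omitted,'' and you have carried out precisely that adaptation, computing $Win(Q,\phi_t^Q)=\frac{k(1+t(1-F(Q)))}{1+tk}$ via \textbf{A2}, verifying the positivity condition reduces to the hypothesis $\frac{V}{C}\ge\frac{1-k}{(t+1)k}$, and performing the algebra that collapses (\ref{eqn-best-response-T}) into (\ref{eqn-steady-state-eq-T}). The only minor cosmetic point is that your existence argument could state explicit upper and lower bounds for the right-hand side of (\ref{eqn-steady-state-eq-T}) (as the paper does for $t=1$) rather than the slightly vague ``by inspection of its limiting values,'' but the claim itself is correct.
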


We know from Lemma \ref{lemma-increasing-T} above that the best response cutoff for a given $\phi$ increases in $t$. On the other hand, the share of eligible researchers given cutoff $Q$ is $\frac{1+tk}{1+t(1-F(Q))}$, which decreases in $t$. Thus, similar to the comparison between the benchmark and one-period exclusion of the previous section, when $t$ increases the corresponding equilibrium cutoff may either increase or decrease. Figure \ref{figure-Multi-Period} shows an example in which the sequence of cutoffs $\{Q_t\}_t$ is non-monotonic.


\begin{figure}
\centering
\includegraphics[width=.5\textwidth]{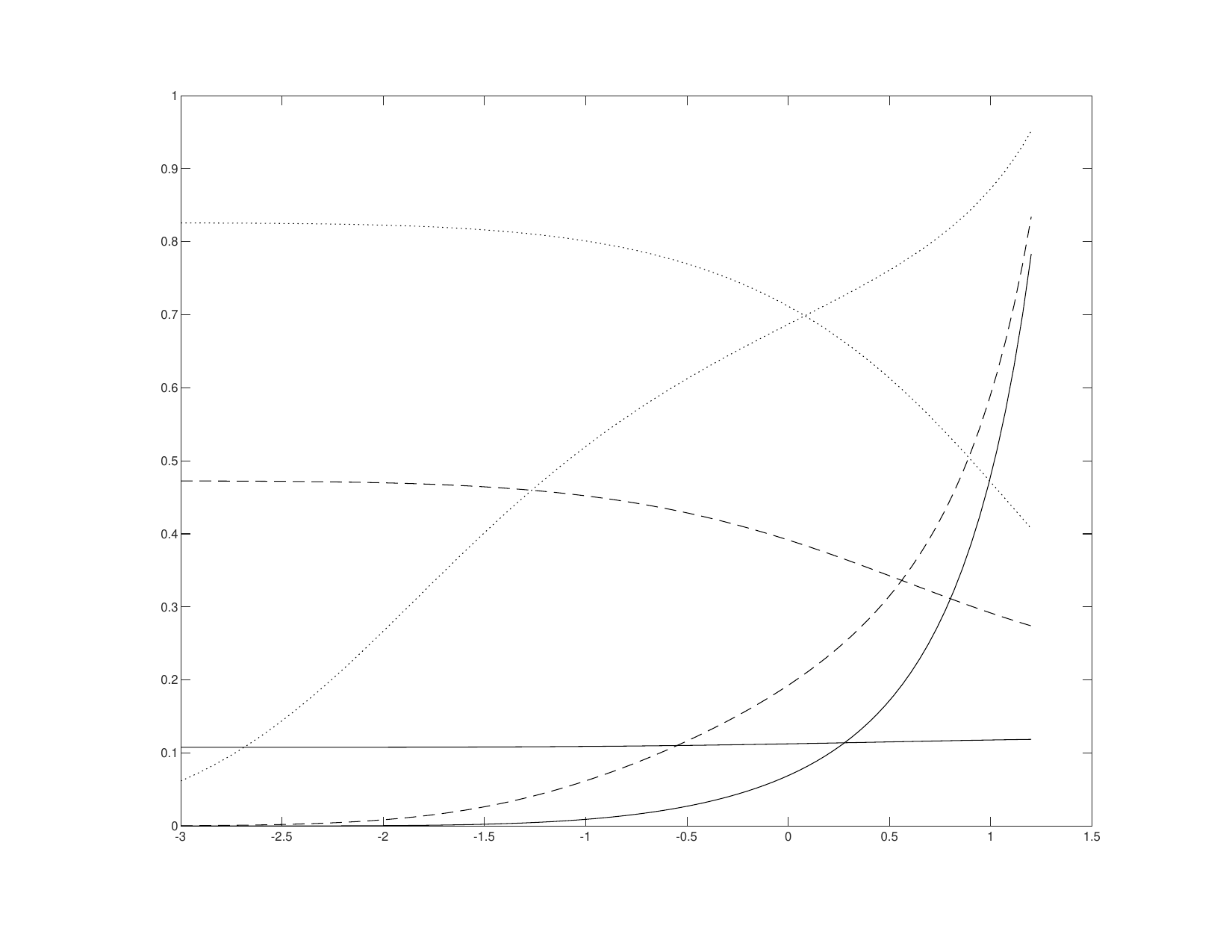}
\caption{Equilibria cutoffs for various lengths of exclusion in the normal-normal model with parameters $\mu_q=0$, $\sigma^2_q=1$, $\sigma^2_s=1$, $C=1$, $V=20$, $\delta=0.85$, and $k=0.1$. The solid lines are for $t=1$, the dashed lines for $t=5$, and the dotted lines for $t=50$. For each of these cases the lines depict both sides of equation (\ref{eqn-steady-state-eq-T}), so that the intersection point is the equilibrium cutoff. Notice that the cutoffs satisfy $Q_{50}<Q_1<Q_5$.}
\label{figure-Multi-Period}
\end{figure}

A natural question is what happens to the steady state when $t$ becomes large. A naive conjecture may be that in this case researchers will wait until they have an extremely high-quality idea to guarantee winning and avoid being struck out for many periods. It turns out however that exactly the opposite is true, as stated in the following proposition.

\begin{proposition}\label{prop-T-large}
    Fix $C,V,k,\delta,F$ and $W$. If for each $t\ge 1$ $Q_t$ is a steady state equilibrium cutoff with $t$-period exclusion then $\lim_t Q_t=-\infty$. Moreover, (i) the share of eligible researchers converges to $k$, (ii) the quality distribution of submissions $\phi^{Q_t}_t$ converges to $kf$, and (iii) the quality distribution of funded ideas $\phi^{Q_t}_t(\cdot) W(\cdot,\phi^{Q_t}_t)$ converges to $kf$.   
\end{proposition}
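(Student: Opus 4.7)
The plan is to argue by contradiction and take $t \to \infty$ in the equilibrium equation \eqref{eqn-steady-state-eq-T}. The intuition is that a long ban shrinks the steady-state fraction of eligible researchers down to exactly $k$, so in the limit the volume of submissions equals the budget; by \textbf{A2} every submission then wins, which makes participation so attractive that the entry cutoff must collapse to $-\infty$.

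First, I would suppose for contradiction that $Q_t \not\to -\infty$. Since $Q_t \in (-\infty, Q^*)$ by Theorem \ref{thm-steady-state-T}, after passing to a subsequence I may assume $Q_t \to Q^\infty \in (-\infty, Q^*]$. Let $a := 1 - F(Q^\infty) \in [k, 1)$. The prefactor $\frac{1+tk}{1+t(1-F(Q_t))}$ tends to $k/a$, and $f^{Q_t}(q) \to f^{Q^\infty}(q)$ for a.e.\ $q$. Since $\phi^{Q_t}_t \le f$ and $f$ is integrable, dominated convergence gives $d(\phi^{Q_t}_t, \phi^\infty) \to 0$ with $\phi^\infty := (k/a) f^{Q^\infty}$. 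Crucially, $v(\phi^\infty) = (k/a) \cdot a = k$, so \textbf{A2} yields $W(\cdot, \phi^\infty) \equiv 1$; then joint continuity of $W$ (\textbf{A1}) gives $W(Q_t, \phi^{Q_t}_t) \to W(Q^\infty, \phi^\infty) = 1$.

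Second, I would take the limit of the right-hand side of \eqref{eqn-steady-state-eq-T}. Using $S_t := 1 + \delta + \cdots + \delta^{t-1} \to s := 1/(1-\delta)$ and dividing numerator and denominator by $t$, the right-hand side tends to
\[
\frac{kC + k\delta s a V}{kC + k(1 + \delta s a) V} \;=\; \frac{C + \delta s a V}{C + (1 + \delta s a) V},
\]
which is strictly less than $1$ because the denominator exceeds the numerator by $V > 0$. This contradicts $W(Q_t, \phi^{Q_t}_t) \to 1$, so $Q_t \to -\infty$.

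Finally, parts (i)--(iii) follow from $Q_t \to -\infty$. Since $F(Q_t) \to 0$, the eligibility share $\frac{1+tk}{1+t(1-F(Q_t))}$ converges to $k$, proving (i). For (ii), $\phi^{Q_t}_t(q) \to k f(q)$ pointwise and $\phi^{Q_t}_t \le f$, so dominated convergence gives $d(\phi^{Q_t}_t, kf) \to 0$. For (iii), $v(kf) = k$ and \textbf{A2} imply $W(\cdot, kf) \equiv 1$, and continuity of $W$ gives $W(q, \phi^{Q_t}_t) \to 1$ pointwise, so $\phi^{Q_t}_t(\cdot) W(\cdot, \phi^{Q_t}_t) \to k f$. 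The main obstacle is the first two paragraphs: carefully exploiting the boundary behavior at $v(\phi) = k$, where the winning probability jumps up to $1$ by \textbf{A2} while the equilibrium equation pins it rigidly below $1$ for any finite limit $Q^\infty$.
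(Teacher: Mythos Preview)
Your proof is correct and follows essentially the same route as the paper: argue by contradiction, extract a convergent subsequence with limit $Q^\infty\in(-\infty,Q^*]$, show that the steady-state submissions converge to a distribution with total mass exactly $k$ so that the left-hand side of \eqref{eqn-steady-state-eq-T} tends to $1$ by \textbf{A1}--\textbf{A2}, while the right-hand side tends to a quantity strictly below $1$; then deduce (i)--(iii) from $Q_t\to-\infty$ via dominated convergence. Your limit for the right-hand side, $\frac{C+\delta s a V}{C+(1+\delta s a)V}$ with $s=1/(1-\delta)$ and $a=1-F(Q^\infty)$, is algebraically the same as the paper's expression after dividing through by $k(1-\delta)$.
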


Thus, long bans induce the worst possible outcome for the principal: In equilibrium there is virtually no self-selection, almost all eligible researchers apply, and almost everyone who applies wins. The intuition for this result is that the steady state share of eligible researchers uniformly (in $Q$) converges to $k$ as $t$ increases, implying that for any quality $q$ the probability of winning is close to one. Thus, for not applying to be a best response it must be that the cost of rejection grows unbounded as $t$ increases. But even with arbitrarily long bans the cost of rejection is bounded above by $\frac{V}{1-\delta}$, the payoff from winning in every future period. 


\section{Heterogeneous researchers}\label{sec-heterogenous}


Up to now, researchers in our model were identical in terms of the distribution of quality of ideas they generate each period. In this section we consider an extension where researchers have persistent types. Our goal is twofold: The first is to illustrate that the analysis carries over rather easily to this more realistic environment. The second is to show that with heterogeneity exclusion policies generate interesting differential effects across researchers.   

Suppose that there are two types of researchers, High ($H$) and Low ($L$). The fractions of types in the population are $\lambda_H$ and $\lambda_L$ ($\lambda_H+\lambda_L=1$) for $H$ and $L$ types, respectively. The per-period quality distribution generated by type $i$ researchers is $F_i$, with associated positive and continuous density $f_i$ ($i=H,L$). In the entire population the density is therefore $f:=\lambda_Hf_H+\lambda_Lf_L$. All other parts of the model are the same as before. In particular, the reviewing process is `type blind', meaning that conditional on quality $q$ the probability of winning is independent of the type and thus given by $W(q,\phi)$.\footnote{This assumption is more appropriate in cases where reviewers are unaware of the identity of the submitter.} 

For each type $i\in \{H,L\}$ the set of possible quality distributions of submissions is $\Gamma_{\lambda_i f_i}$, and we denote by $\phi_i$ a typical quality distribution submitted by type $i$ researchers. The sum $\phi=\phi_H+\phi_L$ is then the total quality distribution. We also let 
$$k_i(\phi_H,\phi_L) = \int \phi_i(q) W(q,\phi_H+\phi_L)dq$$
be the volume of winning applications of type $i$ researchers. Note that $k_L(\phi_H,\phi_L)+k_H(\phi_H,\phi_L)=k$ whenever $v(\phi_L+\phi_H)>k$ by \textbf{A2}. 

Consider first the benchmark without exclusion. Since there is no interdependence between different periods, the optimal action of a researcher conditional on their current period quality $q$ is independent of their type. Thus, the analysis of this case is exactly the same as in Section \ref{sec-benchmark} above, and the equilibrium is characterized by a single cutoff $Q_0$ that satisfies $W(Q_0,f^{Q_0})=\frac{C}{C+V}$ as in Proposition \ref{prop-benchmark}. Both types of researchers apply if and only if their current period quality is at least $Q_0$.

Suppose now that an exclusion policy as in Section \ref{sec-steady-states-conditional} is introduced. As in that section, let 
\begin{equation}\label{eqn-steady-state-payoff-types}
x_i(Q,\phi) :=\frac{Win_i(Q,\phi)V-(1-F_i(Q)-Win_i(Q,\phi))C}{(1-\delta)[1 + \delta(1-F_i(Q)-Win_i(Q,\phi))]}.
\end{equation}
be the payoff of a researcher of type $i$ that follows cutoff $Q$ and faces competition $\phi$ every period, where $Win_i(Q,\phi) := \int_Q^{\infty} f_i(q)W(q,\phi)dq$ is the winning probability of the researcher. 

To define a steady state equilibrium, we need to describe the steady state share of eligible researchers of each type associated with a given strategy profile. Suppose that all type $i$ researchers follow the strategy defined by a cutoff $Q^i$ ($i=H,L$). Similar to (\ref{eqn-submissions}), there are two cases to consider: If $\lambda_H(1-F_H(Q^H))+\lambda_L(1-F_L(Q^L))\le k$ then there would be no rejections and the shares of eligibility in each period are $(\lambda_H,\lambda_L)$. If on the other hand $\lambda_H(1-F_H(Q^H))+\lambda_L(1-F_L(Q^L))> k$ then steady state eligibility shares $(\alpha_H,\alpha_L)$ must satisfy
\begin{equation}\label{eqn-fixed-point-types}
    \alpha_i=\lambda_i - \left[\alpha_i(1-F_i(Q^i)- k_i\left(\alpha_H f_H^{Q^H},~ \alpha_L f_L^{Q^L}\right) \right].
\end{equation}

\begin{definition}\label{def-fixed-point-types}
    We say that $(\alpha_H,\alpha_L)\in [0,\lambda_H] \times [0,\lambda_L]$ are steady state eligibility shares for cutoffs $(Q^H,Q^L)$ if\\
    (i) $\alpha_i=\lambda_i$ for $i=H,L$ when $\lambda_H(1-F_H(Q^H))+\lambda_L(1-F_L(Q^L))\le k$; \\
    (ii) $\alpha_i$ satisfies (\ref{eqn-fixed-point-types}) for $i=H,L$ when $\lambda_H(1-F_H(Q^H))+\lambda_L(1-F_L(Q^L))> k$.
\end{definition}



We note that it follows from a standard fixed-point argument that steady state eligibility shares $(\alpha_H,\alpha_L)$ exist for every pair of cutoffs $(Q^H,Q^L)$. We can now define steady state equilibrium with exclusion.

\begin{definition}\label{def-equilibrium-types}
    The pair of cutoffs $(Q_1^H,Q_1^L)$ is a steady state equilibrium with exclusion if there exist steady state eligibility shares $(\alpha_H,\alpha_L)$ for $(Q_1^H,Q_1^L)$ such that $Q_1^i$ is type's $i$ best response to $\phi=\alpha_H f_H^{Q_1^H} + \alpha_L f_L^{Q_1^L}$ for both $i=H,L$. 
\end{definition}



As in the case of homogeneous population, when the benefit of winning $V$ is sufficiently large, the benchmark equilibrium cutoff $Q_0$ is smaller than any equilibrium cutoffs $Q_1^H$ and $Q_1^L$ with exclusion. We omit the proof of this result as it follows the same logic as in the proofs of Corollary \ref{coro-comparison-large-V} and Proposition \ref{prop-noisy-review-large-V}. 

Once exclusion policy is introduced, researchers of different types face different incentives when deciding whether to apply, even conditional on having the same quality $q$. Indeed, if $H$ types tend to produce ideas of higher quality than $L$ types, then the option value of participating in the next period is higher for $H$ types. Thus, $H$ types face a larger downside risk and therefore require a higher winning probability in order to apply. The next proposition formalizes this intuition.

\begin{proposition}\label{prop-compare-types}
If $F_H$ first-order stochastically dominates $F_L$, and if $(Q_1^H,Q_1^L)$ is a steady state equilibrium with exclusion, then $Q_1^H \ge Q_1^L$. Moreover, if $F_H(q)<F_L(q)$ for all $q$, then $Q_1^H > Q_1^L$.
\end{proposition}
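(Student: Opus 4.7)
My plan is to reduce the comparison of $Q_1^H$ and $Q_1^L$ to a single-equation comparison by deriving a first-order characterization of each type's best response in which the type dependence is packaged into a scalar that is monotone under FOSD. Let $\phi = \alpha_H f_H^{Q_1^H} + \alpha_L f_L^{Q_1^L}$ be the steady-state aggregate competition. I would first verify that $v(\phi)>k$: if both cutoffs were $-\infty$, the eligibility equations in Definition \ref{def-fixed-point-types} yield $\alpha_i=(\lambda_i+k_i)/2$ and hence $v(\phi)=(1+k)/2>k$, but then the type-$i$ analogue of Lemma \ref{lemma-best-response} rules out $-\infty$ as a best response---contradiction. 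Thus both cutoffs are finite, and by the direct analogue of Lemma \ref{lemma-best-response} (replacing $F$ with $F_i$) each $Q_1^i$ is the unique solution to $W(Q,\phi)=[C+\delta(1-\delta)x_i(Q,\phi)]/[C+\delta(1-\delta)x_i(Q,\phi)+V]$.

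The main technical step---and the one I expect to be the principal hurdle---is to substitute the closed form (\ref{eqn-steady-state-payoff-types}) for $x_i(Q,\phi)$ into this indifference condition, clear denominators, and collect the terms involving $Win_i(Q,\phi)$ and $1-F_i(Q)$. The algebra parallels the derivation of Corollary \ref{coro-best-response}, and the output is the equivalent characterization
\begin{equation*}
W(Q_1^i,\phi)(C+V) \;=\; C + \delta V \cdot \Omega_i(Q_1^i), \qquad \Omega_i(Q) := \int_Q^{\infty} f_i(q)\,[W(q,\phi)-W(Q,\phi)]\,dq.
\end{equation*}
Noting the identity $\Omega_i(Q)=Win_i(Q,\phi)-W(Q,\phi)(1-F_i(Q))$, a direct computation (splitting the range at any $Q_1<Q_2$) shows $\Omega_i$ is strictly decreasing in $Q$. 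Since the left-hand side is strictly increasing in $Q$ by \textbf{A3}, the displayed equation admits a unique solution, which must be $Q_1^i$.

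The comparison is then immediate. Rewriting $\Omega_i(Q)=\int f_i(q)\,(W(q,\phi)-W(Q,\phi))^{+}\,dq$ presents $\Omega_i(Q)$ as the expectation under $F_i$ of the nondecreasing function $q\mapsto (W(q,\phi)-W(Q,\phi))^{+}$ (identically zero on $(-\infty,Q]$ and strictly increasing on $[Q,\infty)$ by \textbf{A3}). FOSD of $F_H$ over $F_L$ therefore yields $\Omega_H(Q)\ge \Omega_L(Q)$ for every $Q$, so the downward-sloping right-hand side of the characterization lies pointwise weakly above the one for type $L$. Intersecting with the strictly increasing left-hand side forces $Q_1^H\ge Q_1^L$. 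For the strict statement, if $F_H(q)<F_L(q)$ for all $q$, an integration-by-parts argument---using that the nondecreasing integrand is strictly increasing on $(Q,\infty)$, where $F_H<F_L$ strictly---upgrades this to $\Omega_H(Q)>\Omega_L(Q)$, and the same crossing argument yields $Q_1^H>Q_1^L$.
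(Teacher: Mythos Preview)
Your argument is correct, and it takes a genuinely different route from the paper's.

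The paper works at the level of value functions: it shows that $Q_1^H\ge Q_1^L$ is equivalent (via the monotonicity of the right-hand side of (\ref{eqn-types-best-response}) in $x_i$) to $x_H(Q_1^H,\phi)\ge x_L(Q_1^L,\phi)$. To prove the latter, it introduces an auxiliary cutoff $\bar Q$ for type $H$ with $1-F_H(\bar Q)=1-F_L(Q_1^L)=:p$, observes that the conditional distribution of $q$ given $q\ge \bar Q$ under $F_H$ first-order dominates the conditional distribution given $q\ge Q_1^L$ under $F_L$, and hence $Win_H(\bar Q,\phi)\ge Win_L(Q_1^L,\phi)$. Since the closed form (\ref{eqn-steady-state-payoff-types}) depends on $(p,Win_i)$ only, and is increasing in $Win_i$, this gives $x_H(\bar Q,\phi)\ge x_L(Q_1^L,\phi)$; optimality of $Q_1^H$ then yields $x_H(Q_1^H,\phi)\ge x_H(\bar Q,\phi)$.

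Your approach instead rewrites the indifference condition as $W(Q,\phi)(C+V)=C+\delta V\,\Omega_i(Q)$ with $\Omega_i(Q)=\EE_{F_i}\!\big[(W(q,\phi)-W(Q,\phi))^{+}\big]$. This has two advantages: the type dependence is isolated in a single expectation of a nondecreasing function, so FOSD delivers $\Omega_H(Q)\ge\Omega_L(Q)$ immediately without any auxiliary construction; and since the left side is strictly increasing while the right side is strictly decreasing, the crossing comparison is a one-liner. The paper's route, by contrast, has the conceptual appeal of comparing continuation values directly (which is the economic content of the result), and the $\bar Q$ device is a nice trick for comparing value functions across distributions with different supports---it may port more easily to variants where the algebraic simplification to $\Omega_i$ is unavailable. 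Both handle the strict case the same way in spirit (strict FOSD plus strict monotonicity of $W(\cdot,\phi)$), and both rely on the same preliminary contradiction to rule out $v(\phi)\le k$.
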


As a final comment we note that the definitions and results of this section naturally extend to populations with any finite number of types.


\section{Discussion}\label{sec-discussion}

We presented a stylized model of repeated contests in which both the principal and the agents gain from higher levels of self-selection, and studied whether and when temporary exclusion policies help achieving this goal. We end with a short discussion of several important issues that are absent from the paper and may be interesting directions for future work.

Perhaps the most obvious extension is to endogenize the quality distribution in the population by letting quality be a (noisy) function of effort, as is standard in the contests literature. When an exclusion policy is put in place, incentives for exerting effort will be affected, and the quality distribution in the population will be altered. Consider the simplest version with a binary effort choice: Each researcher can choose either high effort at cost $c>0$ or low effort at cost zero; quality is distributed $F_H$ or $F_L$ for a researcher that exerts high or low effort, respectively; once quality is realized, a researcher decides whether to submit or not. Note that if $\lambda_H$ is the fraction of researchers in the population that chooses high effort, then the submission subgame is as in Section \ref{sec-heterogenous} above. However, here this fraction will be determined by an indifference condition, namely, that the payoff gap between high and low effort is equal to the cost of effort $c$. In particular, $\lambda_H$ will typically vary depending on whether or not an exclusion policy is present. We believe that the framework developed in this paper is suitable for analyzing such moral hazard issues, but do not pursue it here. 

As discussed above, previous literature has suggested that `quota mechanisms' which limit the number of times an action can be taken perform well in environments similar to ours. To illustrate, suppose that periods are divided into blocks of size $T$ and each agent can apply at most $kT$ times in each block. Intuitively, if there was no discounting, and if $T$ is large, then the strategy profile in which each agent applies only when their quality is above the first-best cutoff $Q^*$ (until they run out of `budget') is an approximate equilibrium that virtually implements the first-best.\footnote{To make this statement precise one would need to work with a finite number of agents as in \citet{jackson2007overcoming}. See also \citet{hortala2010inefficiencies} and \citet{ball2023linking} for related results when $T$ is away from the limit.} The practical relevance of such mechanisms with large $T$ is however questionable. Furthermore, with discounting the quota of an agent would need to be adjusted over time as in \citet{frankel2016discounted}, and analyzing the equilibria of the induced repeated game is likely challenging. Our approach focusing on steady states makes the model tractable. In any case, it would be interesting to more carefully compare the two approaches.     


We intentionally did not specify an objective function for the principal. Our goal was to keep the model as general as possible to highlight the key trade-offs that exclusion policies introduce to the equilibrium supply of ideas, and to consider policies that are simple and easy to implement. Presumably, however, the principal has preferences over quality distributions of submissions (which induce quality distributions of winners), and would try to design submission policies to maximize this preference. The tractability of such mechanism design exercise would depend on the objective function, the space of allowed mechanisms, and on the specification of the `noise function' $W$. In particular, eligibility can in principle depend on the history of play in a complex way, so to make progress one would likely need to focus on some parametric family of policies similar to the multi-period bans of subsection \ref{subsec-multi-period}.  

Finally, there are many ways in which our model can be generalized to better capture features of real-world contests. Our assumption that the environment is stationary is clearly restrictive as it precludes the possibility that researchers produce on average higher quality ideas over time due to learning, or that they can resubmit rejected ideas from previous rounds; we did not consider heterogeneous awards, and moreover assumed that the benefit of winning $V$ is independent of the idea quality $q$; all the agents in our model discount future payoffs at the same rate, ruling out differences based on career stage; and we have assumed away the possibility that referees' evaluations would change once exclusion policy is introduced. These interesting extensions are beyond the scope of this paper.








\bibliographystyle{plainnat}
\bibliography{temporary_bans}

\appendix

\section{Missing proofs}\label{sec-proofs}

\noindent \textbf{Proof of Proposition \ref{prop-benchmark}}

First, any $Q\ge Q^*$ cannot be an equilibrium cutoff since then $v(f^Q)=1-F(Q)\le k$, implying that the best response is to always submit. Second, $Q=-\infty$ (always submit) is not an equilibrium since by \textbf{A3} $\lim_{q\downarrow -\infty} W(q,f)=0$, so all sufficiently low qualities prefer not to submit. 

Now, let $l:(-\infty,Q^*)\to [0,1]$ be defined by $l(Q)=W(Q,f^Q)$. If $Q<Q'<Q^*$ then 
$$l(Q)=W(Q,f^Q) < W(Q',f^Q) \le W(Q',f^{Q'}) = l(Q'),$$
where the strict inequality is by \textbf{A3} and the weak inequality is by \textbf{A4}. Thus, $l$ is strictly increasing. Further, since the mapping $Q\to f^Q$ is continuous, it immediately follows from \textbf{A1} that $l$ is continuous. Also, 
$$\lim_{Q\uparrow Q^*} l(Q) = W(Q^*,f^{Q^*}) = 1,$$
where the first equality is by \textbf{A1} and the second by \textbf{A2}. Next, we claim that $\lim_{Q\downarrow -\infty} l(Q) = 0$. Indeed, for a given $\epsilon>0$ let $Q'$ be small enough so that $W(Q',f)<\epsilon$ (by \textbf{A3}), and then pick $Q''<Q'$ small enough so that $|W(Q', f^{Q''})- W(Q',f)|<\epsilon$ (by \textbf{A1}). Then 
$$l(Q'') = W(Q'',f^{Q''}) < W(Q',f^{Q''}) \le  |W(Q', f^{Q''})- W(Q',f)| + W(Q',f) < 2\epsilon,$$
where the first inequality is by $\textbf{A3}$, and the last is by the choice of $Q'$ and $Q''$. Since $l$ was proved to be increasing it follows that $l(Q)<2\epsilon$ for all $Q< Q''$.    

Therefore, there exists a unique $Q_0$ such that $l(Q_0)=\frac{C}{C+V}$. Clearly, $Q_0$ is a best response to $f^{Q_0}$ if and only if $l(Q_0)=\frac{C}{C+V}$.

\bigskip

\noindent \textbf{Proof of Lemma \ref{lemma-best-response}}

\noindent (i) If $v(\phi)\le k$ then, by \textbf{A2}, $W(q,\phi)=1$ for every $q$, implying that submitting every period results in the maximal possible payoff of $\frac{V}{1-\delta}$.

\noindent (ii) Fix $\phi$ with $v(\phi)> k$. Suppose that the conditions (\ref{eqn-best-response}) hold. We can rewrite the equality in (\ref{eqn-best-response}) as 
    $$W(Q,\phi)(V+\delta x(Q,\phi)) + (1-W(Q,\phi))(-C+\delta^2 x(Q,\phi))= \delta x(Q,\phi).$$
    The left-hand side is the total expected payoff of a researcher that has type $Q$ in the current period, chooses to submit, and follows the threshold strategy $Q$ in every future period; the right-hand side is the expected payoff of this researcher if they choose not to submit. Now, since $x(Q,\phi)> 0$ we have $V+\delta x(Q,\phi) > -C+\delta^2 x(Q,\phi)$, implying that the left-hand side is strictly increasing in $W$. Thus, by \textbf{A3}, types above $Q$ strictly prefer to submit and types below $Q$ strictly prefer not to submit. In other words, if the researcher uses the cutoff strategy $Q$ in the future, then it is optimal to use it also in the current period. By \citet[Theorem 1]{blackwell1962discrete} $Q$ is a best response to $\phi$.

    Conversely, suppose that $Q\in \RR$ is a best response to $\phi$. First, we must have $x(Q,\phi)> 0$ since by \textbf{A3} $W(q,\phi)>\frac{C}{C+V}$ for all large enough $q$, so the researcher can guarantee a positive expected payoff if they set a large enough cutoff. Second, it follows from \citet[Theorem 2]{blackwell1962discrete} that $x(Q,\phi)$ is (weakly) higher than the payoff of a researcher that follows some other cutoff $Q'$ in the current period and then switches to cutoff $Q$ in all future periods. Thus, for all $Q'<Q$ we have 
    \begin{eqnarray*}
    F(Q)(0+\delta x) &+& \int_Q^{\infty} f(q)\left[W(q,\phi)(V+\delta x) + (1-W(q,\phi))(-C+\delta^2 x) \right] dq \ge \\
    & & F(Q')(0+\delta x) + \int_{Q'}^{\infty} f(q)\left[W(q,\phi)(V+\delta x) + (1-W(q,\phi))(-C+\delta^2 x) \right] dq,
    \end{eqnarray*}
    where for simplicity we write $x$ instead of $x(Q,\phi)$. Rearranging gives
    $$\delta x \ge \frac{1}{F(Q)-F(Q')} \int_{Q'}^Q  f(q)\left[W(q,\phi)(V+\delta x)+(1-W(q,\phi))(-C+\delta^2 x) \right]dq.$$
    Taking limit as $Q'\uparrow Q$ results in 
    $$\delta x \ge \left[W(Q,\phi)(V+\delta x)+(1-W(Q,\phi))(-C+\delta^2 x) \right],$$
    which after simple algebra becomes 
    $$W(Q,\phi)\le \frac{C+\delta(1-\delta)x}{C+\delta(1-\delta)x+V}.$$
    Repeating the argument for $Q'>Q$ gives the reverse inequality, proving the equality in (\ref{eqn-best-response}).  

    For existence, consider the two sides of the equation in (\ref{eqn-best-response}) for $Q$'s larger than the point at which $W(Q,\phi)=\frac{C}{C+V}$. The left-hand side is equal to $\frac{C}{C+V}$ at this point and is strictly increasing and continuous thereafter. The right-hand side value at this point is strictly larger than $\frac{C}{C+V}$ ($x(Q,\phi)$ is positive at any $Q$ in this interval since the researcher has a positive expected payoff in every period they submit), it is continuous, and it converges to $\frac{C}{C+V}$ as $Q\uparrow +\infty$. Thus, there exists $Q$ in this interval where (\ref{eqn-best-response}) holds.
            
        
    For uniqueness, suppose that $Q,Q'\in \RR$ are both optimal. Then from the equality in (\ref{eqn-best-response}) we get 
    $$W(Q,\phi)= \frac{C+\delta(1-\delta)x(Q,\phi)}{C+V+\delta(1-\delta)x(Q,\phi)} = \frac{C+\delta(1-\delta)x(Q',\phi)}{C+V+\delta(1-\delta)x(Q',\phi)} = W(Q',\phi),$$
    so by strict monotonicity of $W(\cdot,\phi)$ in \textbf{A3} we have $Q=Q'$.

\bigskip

\noindent \textbf{Proof of Proposition \ref{prop-unique-ban}}

     The right-hand side of (\ref{eqn-steady-state-eq}) is the ratio of two linear functions of $F(Q_1)$ and is therefore monotone. When $Q_1 \to -\infty$ this ratio converges to $\frac{(1+k)C+2k\delta V}{(1+k)C+(1+k)(1+\delta)V}$, and when $Q_1\to Q^*$ this ratio converges to $\frac{C+k\delta V}{C+(1+\delta k)V}$. The condition $\frac{V}{C}\ge \frac{1}{k(1-\delta)}$ implies that the former is larger than the latter, and hence that the right-hand side of (\ref{eqn-steady-state-eq}) decreases in $Q_1$. 
    

    Next, we argue that the left-hand side of (\ref{eqn-steady-state-eq}), $W(Q,\phi^Q)$, is increasing in $Q$. To see this, recall that $W(Q,\phi^Q)=1-G(\bar s(\phi^Q)-Q)$, where $\bar s(\phi^Q)$ is determined by $\int \phi^Q(q)[1-G(\bar s(\phi^Q)-q)]dq=k$. Plugging in $\phi^Q=\frac{1+k}{2-F(Q)}f^Q$ this last equality becomes
    \begin{equation}\label{eqn-normal-ban}
        \frac{1}{2-F(Q)} \int_Q^\infty f(q)  [1-G(\bar s(\phi^Q)-q)] dq = \frac{k}{1+k},
    \end{equation}
    so that in particular the expression on the left-hand side of (\ref{eqn-normal-ban}) is constant in $Q$. 
    
    Now, fix $Q'>Q$ and denote $d=Q'-Q>0$. We show that $\bar s(\phi^{Q'})-d < \bar s(\phi^{Q})$ must hold. Indeed, if that was not the case then  
    \begin{eqnarray}\label{eqn-normal-ban-2}
       \int_{Q'}^\infty f(q)  [1-G(\bar s(\phi^{Q'})-q)] dq = \int_{Q'}^\infty f(q)  [1-G(\bar s(\phi^{Q'})-d-(q-d))] dq = \notag\\
        \int_{Q}^\infty f(q'+d)  [1-G(\bar s(\phi^{Q'})-d-q')] dq' \le \int_{Q}^\infty f(q'+d)  [1-G(\bar s(\phi^Q)-q')] dq', 
    \end{eqnarray}
    where the second equality is a change of variable $q'=q-d$, and the inequality is per our assumption that $\bar s(\phi^{Q'})-d \ge \bar s(\phi^{Q})$. Combining (\ref{eqn-normal-ban}) and (\ref{eqn-normal-ban-2}) then gives
        $$\frac{1}{2-F(Q)} \int_Q^\infty f(q)  [1-G(\bar s(\phi^Q)-q)] dq \le  \frac{1}{2-F(Q')} \int_{Q}^\infty f(q+d)  [1-G(\bar s(\phi^{Q})-q)] dq,$$
    or
    \begin{equation}\label{eqn-normal-ban-3}
    \int_Q^\infty \left[\frac{f(q+d)}{2-F(Q')} - \frac{f(q)}{2-F(Q)} \right](1-G(\bar s(\phi^Q)-q)) dq \ge 0.
     \end{equation}

    On the other hand, we also have that 
    $$\int_Q^\infty \left[\frac{f(q+d)}{2-F(Q')} - \frac{f(q)}{2-F(Q)} \right]dq = \left[ \frac{1-F(Q')}{2-F(Q')} -  \frac{1-F(Q)}{2-F(Q)} \right] <0.$$
    Combining this with the assumption that the ratio $\frac{f(q+d)}{f(q)}$ is decreasing, it implies that there exists $\bar q\in [Q,\infty)$ such that $\frac{f(q+d)}{2-F(Q')} - \frac{f(q)}{2-F(Q)}$ is positive on $[Q,\bar q)$ and negative on $(\bar q, +\infty)$. Thus, replacing the increasing function $1-G(\bar s(\phi^Q)-q)$ by the constant $1-G(\bar s(\phi^Q)-\bar q)$ in (\ref{eqn-normal-ban-3}) makes the integrand pointwise larger. Hence,
    \begin{eqnarray*}
        0 &\le& \int_Q^\infty \left[\frac{f(q+d)}{2-F(Q')} - \frac{f(q)}{2-F(Q)} \right](1-G(\bar s(\phi_Q)-q)) dq \le \\
        & & (1-G(\bar s(\phi_Q)-\bar q)) \int_Q^\infty \left[\frac{f(q+d)}{2-F(Q')} - \frac{f(q)}{2-F(Q)} \right] dq =\\ 
        & & (1-G(\bar s(\phi_Q)-\bar q)) \left[ \frac{1-F(Q')}{2-F(Q')} -  \frac{1-F(Q)}{2-F(Q)} \right] <0,
    \end{eqnarray*}
    a contradiction. 
    
    We conclude that $\bar s(\phi^{Q'})-d < \bar s(\phi^{Q})$ and therefore that
    $$W(Q',\phi^{Q'})=1-G(\bar s(\phi^{Q'})-Q') = 1-G(\bar s(\phi^{Q'})-d-(Q'-d)) > 1-G(\bar s(\phi^{Q})-Q) = W(Q,\phi^Q).$$

    Since the left-hand side of (\ref{eqn-steady-state-eq}) is increasing and the right-hand side is decreasing there can be at most one equilibrium. Since $\frac{1-k}{2k}< \frac{1}{k(1-\delta)}$, Theorem \ref{thm-steady-state} and the assumption of the proposition guarantee that an equilibrium exists.

\bigskip

\noindent \textbf{Proof of Proposition \ref{prop-noisy-review-large-V}}

    Fix $\bar s>-\infty$. The proof follows the footsteps of the proof of Corollary \ref{coro-comparison-large-V}. As we have already argued, $\lim_{V\uparrow +\infty} Q_0=-\infty$, so it is sufficient to show that there is some $\underline{Q}$ such that for all $V$ large enough any steady state equilibrium cutoff $Q$ given $\bar s$ satisfies $Q>\underline{Q}$.

    The first step is to obtain a characterization of the best response to a given recurrent distribution of submissions $\phi$. Just as in Lemma \ref{lemma-best-response}, if $v(\phi)\le k$ then the unique best response is to always submit ($Q=-\infty$), and if $v(\phi)> k$ then the unique best response threshold $Q\in \RR$ is characterized by 
    $$x(Q,\phi,\bar s) > 0 ~~~~\textit{and}~~~~ W(Q,\phi)V - (1-W(Q,\phi))C = \delta(1-\delta)G(\bar s-Q)x(Q,\phi,\bar s).$$
    The latter equation can be rewritten as 
    \begin{equation}\label{eqn-best-response-noisy-review}
    W(Q,\phi) = \frac{C+\delta(1-\delta)G(\bar s-Q)x(Q,\phi,\bar s)}{C+V}.
    \end{equation}

    Next, the volume of submissions in the steady state associated with cutoff $Q$ is given by $\frac{1-F(Q)}{1+Ban(Q,\bar s)}$. Note that for $Q=-\infty$ this volume is larger than $k$ by the assumption in the proposition. Moreover, if $Q>-\infty$ is a steady state equilibrium then this volume must also be larger than $k$, since otherwise the probability of winning for any quality is 1, contradicting the fact that qualities below $Q$ do not apply. Therefore, the volume of submissions in any steady state equilibrium is greater than $k$. Just as in the proof of Theorem \ref{thm-steady-state}, it then follows from \textbf{A2} that the steady state winning probability of a researcher under equilibrium $Q$ is given by
    $$Win \left(Q,\frac{1}{1+Ban(Q,\bar s)}f^Q \right) = k(1+Ban(Q,\bar s)).$$
    Therefore, if $Q$ is an equilibrium then the payoff of an eligible researcher is given 
    $$x\left(Q,\frac{1}{1+Ban(Q,\bar s)}f^Q ,\bar s\right) = \frac{k(1+Ban(Q,\bar s))V-(1-F(Q)- k(1+Ban(Q,\bar s)))C}{(1-\delta)(1+\delta Ban(Q,\bar s))}.$$
    Plugging this into the right-hand side of (\ref{eqn-best-response-noisy-review}), we get that the following must hold in equilibrium:
    \begin{eqnarray*}
    & &   W\left(Q,\frac{1}{1+Ban(Q,\bar s)}f^Q\right) = \\
    & &   \frac{C(1+\delta Ban(Q,\bar s))+\delta G(\bar s-Q)\left[k(1+Ban(Q,\bar s))V-(1-F(Q)- k(1+Ban(Q,\bar s)))C\right]}{(C+V)(1+\delta Ban(Q,\bar s))}.
    \end{eqnarray*}

    Now, when $V$ grows to infinity, the limit of the right-hand side of the above equation is 
    $$\frac{\delta k G(\bar s-Q)(1+Ban(Q,\bar s))}{1+\delta Ban(Q,\bar s)}\ge \delta k G(\bar s-Q).$$
    Also, since by assumption $\frac{1}{1+Ban(-\infty,\bar s)}>k$, it follows as in the proof of Proposition \ref{prop-benchmark} that $\lim_{Q\downarrow -\infty} W\left(Q,\frac{1}{1+Ban(Q,\bar s)}f^Q\right)=0$. Let $\underline{Q}$ be small enough so that 
    $$ W\left(Q,\frac{1}{1+Ban(Q,\bar s)}f^Q\right) < \delta k G(\bar s-Q)$$ 
    for all $Q<\underline{Q}$. Then, for all sufficiently large $V$, any equilibrium cutoff $Q$ must be greater than $\underline{Q}$.
    

\bigskip

\noindent \textbf{Proof of Proposition \ref{prop-noisy-review-comparison}}

    Fix $\bar s$ and let $Q$ be an equilibrium cutoff. To simplify the notation let $\phi(q) := \frac{1}{1+Ban(Q,\bar s)}f^{Q}(q)$ be the quality distribution of submissions in the steady states equilibrium given $\bar s$. Recall that the quality distributions of winners are then given by
    $$h(q)=\phi(q)(1-G(\bar s(\phi)-q)) = \frac{1}{1+Ban(Q,\bar s)}f^{Q}(q)(1-G(\bar s(\phi)-q))$$
    and 
    $$h_0(q)=f^{Q_0}(q)\left(1-G\left(\bar s\left(f^{Q_0}\right)-q\right)\right).$$

    We start by proving part (i) of the proposition. Suppose then that $Q>Q_0$. If $q<Q$ then $h(q)=0$ so clearly $h(q)\le h_0(q)$. For $q\ge Q$ we have that $h(q)\le h_0(q)$ if and only if
    \begin{equation}\label{eqn-condtional-unconditional-normal}
        \frac{1}{1+Ban(Q, \bar s)} \le \frac{1-G\left(\bar s\left(f^{Q_0}\right)-q\right)}{1-G(\bar s (\phi)-q)}.
    \end{equation}
    Notice that $f^{Q_0}(q)\ge \phi(q)$ (with strict inequality for $q>Q_0$) and therefore that $\bar s\left(f^{Q_0}\right) > \bar s (\phi)$. Since it is assumed that $G$ has increasing hazard rate, it follows that the right-hand side of (\ref{eqn-condtional-unconditional-normal}) is monotonically increasing in $q$. Furthermore, the left-hand side of (\ref{eqn-condtional-unconditional-normal}) is strictly smaller than 1, while the right-hand side converges to 1 as $q\to +\infty$. Thus, there exists some $\bar Q>Q$ such that $h(q)\le h_0(q)$ for $q>\bar Q$ and the reverse inequality holds whenever $q\in [Q,\bar Q]$.  

    Consider now case (ii) where $Q\le Q_0$. Here, too, (\ref{eqn-condtional-unconditional-normal}) is necessary and sufficient for $h(q)\le h_0(q)$ whenever $q\ge Q_0$. We distinguish between two cases: First, suppose that $\bar s\left(f^{Q_0}\right)\le \bar s (\phi)$. In this case the right-hand side of (\ref{eqn-condtional-unconditional-normal}) is at least 1 for all $q$, implying that $h(q)\le h_0(q)$ on the entire range $q\in (Q_0,\infty)$, and hence that $h_0$ first-order dominates $h$. Second, suppose that $\bar s\left(f^{Q_0}\right)> \bar s (\phi)$. Then as in case (i) above the right-hand side of (\ref{eqn-condtional-unconditional-normal}) increases in $q$ and converges to 1 as $q\to \infty$. Therefore, $h(q)< h_0(q)$ for all sufficiently large $q$ and these two functions can cross at most once in the range $q\in (Q_0,\infty)$. This again implies that $h_0$ first-order dominates $h$ and the proof is complete.

\bigskip

\noindent \textbf{Proof of Lemma \ref{lemma-increasing-T}}

    Fix $\phi$ as in the lemma. First, note that the set of cutoffs $Q$ at which $x_t(Q,\phi)$ is positive is independent of $t$. Second, it is immediate to verify that for any $Q$ in this set the product $(1-\delta^t)x^t(Q,\phi)$ is increasing in $t$ (recall that this is exactly the loss in payoff due to exclusion). 

    Fix $t>t'$, let $Q$ be the best response to $\phi$ with $t$-period exclusion, and let $Q'$ be the best response with $t'$-period exclusion. Then 
    \begin{eqnarray*}
    W(Q,\phi) = \frac{C+\delta(1-\delta^t)x_t(Q,\phi)}{C+V+\delta(1-\delta^t)x_t(Q,\phi)} & \ge & \frac{C+\delta(1-\delta^t)x_t(Q',\phi)}{C+V+\delta(1-\delta^t)x_t(Q',\phi)} > \\
    & & \frac{C+\delta(1-\delta^{t'})x_{t'}(Q',\phi)}{C+V+\delta(1-\delta^{t'})x_{t'}(Q',\phi)} = W(Q',\phi),
    \end{eqnarray*}
    where the first equality is by (\ref{eqn-best-response-T}), the first inequality is by the optimality of $Q$ given $t$, the next inequality follows from $t>t'$ and $x_t(Q',\phi)>0$, and the last equality is again from (\ref{eqn-best-response-T}). Since $W$ is strictly increasing in $Q$ it follows that $Q>Q'$.

\bigskip

\noindent \textbf{Proof of Proposition \ref{prop-T-large}}

    Recall first that from Theorem \ref{thm-steady-state-T} we must have $Q_t\le Q^*$ for all $t$. Now, if $\{Q_t\}$ does not diverge to $-\infty$ then for some $M$ small enough there are infinitely many elements of the sequence in the interval $[M,Q^*]$, and hence there is a convergent subsequence in this interval. With abuse of notation let $\{Q_t\}$ be said subsequence and let $\bar Q$ be its limit point.  
    
    The share of eligible researchers $\frac{1+tk}{1+t(1-F(Q_t))}$ converges to $\frac{k}{1-F(\bar Q)}$ and therefore the sequence of submission quality distributions $\{\phi^{Q_t}_t\}$ converges to $\bar \phi:= \frac{k}{1-F(\bar Q)}f^{\bar Q}$. Note that $v(\bar \phi)=k$. Thus, the left-hand side of (\ref{eqn-steady-state-eq-T}) satisfies
    $$\lim_t W(Q_t, \phi^{Q_t}_t) = W(\bar Q, \bar \phi)=1,$$
    where the first equality is by continuity \textbf{A1} and the second is by \textbf{A2}. On the other hand, at the right-hand side of (\ref{eqn-steady-state-eq-T}) we have
\begin{eqnarray*}
     \lim_t \frac{(1+tk)C+k\delta(1+\delta+\ldots+\delta^{t-1}) (1+t(1-F(Q_t)))V}{(1+tk)C+(1+tk)[1+\delta(1+\delta+\ldots+\delta^{t-1})(1-F(Q_t))]V} =\\
    \frac{(1-\delta)kC+k(\delta -\delta F(\bar Q))V}{(1-\delta)kC+k(1-\delta F(\bar Q))V},
\end{eqnarray*}
    which is bounded away from one for $\bar Q\in (-\infty, Q^*]$. Therefore, for large enough $t$ equality (\ref{eqn-steady-state-eq-T}) cannot hold, which is a contradiction to the assumption that $Q_t$ is an equilibrium cutoff for all $t$.

    The rest of the claims in the proposition immediately follow. Indeed, for (i), $\lim_t \frac{1+tk}{1+t(1-F(Q_t))} = k$. For (ii), note that the sequence $\phi^{Q_t}_t$ pointwise converges to $kf$ and therefore (by Lebesgue's dominated convergence theorem) $d(\phi^{Q_t}_t,kf)\to 0$. Similarly, the sequence $W(\cdot,\phi^{Q_t}_t)\phi^{Q_t}_t$ also converges pointwise to $kf$ (since $W(q,\phi^{Q_t}_t)\to 1$ for each $q$ by \textbf{A1} and \textbf{A2}), proving (iii). 

\bigskip

\noindent \textbf{Proof of Proposition \ref{prop-compare-types}}

    Suppose that the distribution of qualities submitted every period is $\phi$ and that $v(\phi)>k$. Then, just as in Lemma \ref{lemma-best-response}, the unique best response cutoff $Q$ for a type $i$ researcher is characterized by $x_i(Q,\phi)>0$ and
    \begin{equation}\label{eqn-types-best-response}
    W(Q,\phi)= \frac{C+\delta(1-\delta)x_i(Q,\phi)}{C+\delta(1-\delta)x_i(Q,\phi)+V}.
    \end{equation}
    If $v(\phi)\le k$ then $Q=-\infty$ is the unique best response for both types. 

    Now, suppose that $(Q_1^H,Q_1^L)$ is a steady state equilibrium, and let $(\alpha_H,\alpha_L)$ be the corresponding steady state eligibility shares. Denote $\phi=\alpha_H f_H^{Q_1^H} + \alpha_L f_L^{Q_1^L}$ the resulting equilibrium quality distribution of submissions. Note that we can't have $v(\phi)\le k$, since then the best responses are $Q_1^H=Q_1^L=-\infty$, which by Definition \ref{def-fixed-point-types} implies that (\ref{eqn-fixed-point-types}) must hold for both $i=H,L$. Summing up these two equations gives $\alpha_H+\alpha_L=\frac{1+k}{2}>k$, which contradicts the assumption that $v(\phi)\le k$. 

    Therefore, using (\ref{eqn-types-best-response}), to prove the proposition it is sufficient to show that $x_H(Q_1^H,\phi)\ge x_L(Q_1^L,\phi)$, with strict inequality when $F_H(q)<F_L(q)$ everywhere. Denote $p=1-F_L(Q_1^L)$ and let $\bar Q$ be such that $1-F_H(\bar Q)=p$. Consider the cdf's $\tilde F_L,\tilde F_H$ defined by 
    $$\tilde F_L(q) = \left\{ \begin{array}{ll}
                        \frac{F_L(q)-F_L(Q_1^L)}{p} & \textit{ if } ~~ q \ge Q_1^L,\\
                        0 & \textit{ if } ~~q < Q_1^L,\\
    \end{array} \right.$$
    and 
    $$\tilde F_H(q) = \left\{ \begin{array}{ll}
                        \frac{F_H(q)-F_H(\bar Q)}{p} & \textit{ if } ~~ q \ge \bar Q,\\
                        0 & \textit{ if } ~~q < \bar Q.\\
    \end{array} \right.$$
    In words, $\tilde F_L$ ($\tilde F_H$) is the quality distribution of an $L$ ($H$) type conditional on the quality being at least $Q_1^L$ ($\bar Q$). It is immediate that if $F_H$ first-order dominates $F_L$, then $\tilde F_H$ first-order dominates $\tilde F_L$. Further, if $F_H(q)<F_L(q)$ for all $q$, then $\tilde F_H(q)<\tilde F_L(q)$ for all $q>Q_1^L$. Since $W(q,\phi)$ is strictly increasing in $q$ by \textbf{A3}, we have that   
    $$Win_H(\bar Q,\phi) = p\EE_{q\sim \tilde F_H}[W(q,\phi)]\ge p\EE_{q\sim \tilde F_L}[W(q,\phi)] = Win_L(Q_1^L,\phi),$$
    and the inequality is strict when $F_H<F_L$. Therefore, from (\ref{eqn-steady-state-payoff-types}) it follows that 
    \begin{eqnarray*}
        x_H(\bar Q,\phi) &=& \frac{Win_H(\bar Q,\phi)V-(p-Win_H(\bar Q,\phi))C}{(1-\delta)[1 + \delta(p-Win_H(\bar Q,\phi))]} \ge \\
                          & &  \frac{Win_L(Q_1^L,\phi)V-(p-Win_L(Q_1^L,\phi))C}{(1-\delta)[1 + \delta(p-Win_L(Q_1^L,\phi))]} = x_L(Q_1^L,\phi),
    \end{eqnarray*}
    with strict inequality when $F_H<F_L$. 

    Finally, since $Q_1^H$ is type's $H$ best response to $\phi$, we also have that $x_H(Q_1^H,\phi) \ge x_H(\bar Q,\phi)$. Combining this with the above inequality we obtain $x_H(Q_1^H,\phi)\ge x_L(Q_1^L,\phi)$, and with strict inequality when $F_H<F_L$. This proves the proposition

\end{document}